\documentclass[lettersize,journal]{IEEEtran}
\usepackage{amsmath,amssymb,amsfonts}
\usepackage{array}
\usepackage{cite}
\usepackage{amsmath,amssymb,amsfonts}
\usepackage{algorithmic}
\usepackage{graphicx}
\usepackage{textcomp}
\usepackage{mathtools}				
\usepackage{bm}  					
\usepackage[ruled,linesnumbered]{algorithm2e}
\usepackage{xcolor}
\usepackage{color}
\usepackage{stfloats}
\usepackage{endnotes}
\usepackage{url}

\usepackage{amsthm}
\newtheorem{theorem}{Theorem}
\newtheorem{lemma}{Lemma}

\newtheorem{definition}{Definition}  
\newtheorem{assumption}{Assumption}

\allowdisplaybreaks[4]

\newcommand{\T}{\top}					
\hyphenation{op-tical net-works semi-conduc-tor IEEE-Xplore}
\usepackage{xcolor}

\begin{document}
	
	\title{SSGCNet: A Sparse Spectra Graph Convolutional Network for Epileptic EEG Signal Classification}
	
	\author{Jialin Wang,
		Rui Gao,~\IEEEmembership{Member,~IEEE,}
		Haotian Zheng, Hao Zhu, 
		and~C.-J. Richard Shi,~\IEEEmembership{Fellow,~IEEE}
		\thanks{J. Wang, R. Gao, H. Zheng and H. Zhu are with the State Key Laboratory of ASIC and Systems, the Institute of Brain-Inspired Circuits and Systems, and Zhangjiang Fudan International Innovation Center, Fudan University, Shanghai, China, 201203. R.~Gao is also with the Department of Electrical Engineering and Automation, Aalto University, Espoo, Finland, 02150. C.-J. Richard Shi is with the Department of Electrical and Computer Engineering, University of Washington, Seattle, USA, WA 98195 (E-mail: jialinwang16@fudan.edu.cn; rui.gao@aalto.fi; 20212020178@fudan.edu.cn; 20112020135@fudan.edu.cn; cjshi.fudan@gmail.com). }
	}

	\markboth{Journal of \LaTeX\ Class Files,~Vol.~14, No.~8, August~2021}
	{Shell \MakeLowercase{\textit{et al.}}: A Sample Article Using IEEEtran.cls for IEEE Journals}

	\maketitle
	
	\begin{abstract}
		In this article, we propose a sparse spectra graph convolutional network (SSGCNet) for solving Epileptic EEG signal classification problems. The aim is to achieve a lightweight deep learning model without losing model classification accuracy. We propose a weighted neighborhood field graph (WNFG) to represent EEG signals, which reduces the redundant edges between graph nodes. WNFG has lower time complexity and memory usage than the conventional solutions. Using the graph representation, the sequential graph convolutional network is based on a combination of sparse weight pruning technique and the alternating direction method of multipliers (ADMM). Our approach can reduce computation complexity without effect on classification accuracy. We also present convergence results for the proposed approach. The performance of the approach is illustrated in public and clinical-real datasets. Compared with the existing literature, our WNFG of EEG signals achieves up to $10$ times of redundant edge reduction, and our approach achieves up to $97$ times of model pruning without loss of classification accuracy.
	\end{abstract}
	
	\begin{IEEEkeywords}
		Graph neural network, electroencephalogram (EEG) signal classification, weight pruning, alternating direction method of multipliers (ADMM),  nonconvextiy.
	\end{IEEEkeywords}

	\section{Introduction}
	\label{sec:introduction}
	\IEEEPARstart{E}{pilepsy} is one of the most common neurological diseases, which is accompanied by super-synchronous abnormal discharge of electroencephalogram (EEG) signals~\cite{zhu2014epileptic,mohammadpoory2017epileptic}. Generally, doctors need to detect abnormal EEG signals from epileptic seizures from dozens of hours of EEG signals. However, the abnormal EEG signal duration of epileptic seizures is pretty short (only a few seconds)~\cite{hassan2016epileptic}. As a result, it becomes significant to detect abnormal signals from massive EEG data automatically. In this article, the objective is to develop an autonomous effective method to classify epileptic EEG signals.

	The effective representation of EEG signals is an essential prerequisite for distinguishing different categories of EEG signals~\cite{wang2017eeg238,zou2018complex}. Due to the complexity of EEG waveforms, most methods usually have the limitation on representing the relationship between EEG signal sampling points, which makes these methods challenging to extract relationship features between sampling points~\cite{supriya2018eeg,yang2019complex,iacovacci2019visibility}. In recent years, converting signals into graphs has received extensive attention~\cite{zou2018complex, wang2017eeg238}. Although the graph representation based methods can effectively extract the relationship between signal sampling points, these methods have limitations in building effective connections between sample points, especially when there exist a lot of redundant edges in the graph representation. That leads to an enormous generation time and memory usage, that may limit the promotion of graph representation methods on lightweight hardware devices.

	More recently, traditional machine learning methods have been used for EEG signal classification tasks, for example, empirical mode decomposition method, hybrid type machine learning method, and logistic model tree based method~\cite{bajaj2011classification,subasi2019epileptic,kabir2016epileptic}.
	However, most methods require manual feature selection on EEG signals, which heavily relies on researcher's experience and domain knowledge of EEG signals.
	These methods become ineffective once the experience is overly subjective~\cite{cai2018epileptic}. Therefore, it is of great value to propose an autonomous classification method for EEG epilepsy with feature selection.

	Deep learning methods have received extensive attention in autonomous classification~\cite{zhou2018epileptic,hussein2018epileptic}. End-to-end deep learning models can independently extract useful features from data~\cite{asif2020seizurenet,wei2018automatic}. The features captured by the models do not need the scope of knowledge mastered by researchers, and then the models can construct the mapping relationship between input and output signals automatically. Although the models can extract the features in the data independently, it is difficult to directly obtain the hidden features from the original data~\cite{acharya2018deep,Li2018ECCV}. For extracting a large number of features from datasets, large-scale deep learning models have been developed in these years~\cite{Xu2019How,Errica2020A}. With increasing model scale, they have to take a huge of computing resources and memory storage, which leads that EEG signal classification scenarios can not be deployed on practical low-power device. Hence, it becomes promising to optimize lightweight deep learning models arising in epilepsy EEG signal classification.

	In this article, we propose a sparse spectra graph convolutional network (SSGCNet) for Epileptic EEG signal classification. We first represent EEG signals as a frequency domain graph representation, and then use the sequential convolutional module to extract features between graph nodes. Under sparsity constraints, we introduce the ADMM-type splitting and weight pruning strategy, which can compress the model without reducing the classification accuracy. We also analysis the convergence results of the proposed methods. Experimental results demonstrate the promising performance of our methods in various real-world datasets. The main contributions of the article are summarized as follows: 
	
	\begin{enumerate}
	
		\item We present a weighted neighborhood field graph (WNFG) representation method to represent epilepsy EEG signals.
		
		\item We develop a sparse spectra graph convolutional neural network (SSGCNet) model, which effectively extracts the node relationship features and sequential features.

		\item We propose an efficient ADMM weight pruning method, which significantly achieves a high compression rate without loss of classification accuracy.
		
		\item We analyze the convergence results of the proposed method, and then apply the proposed model to several real-world application.

	\end{enumerate}
	
	The main advantage of our model is that the computational cost and space occupancy rate is much less than in other traditional methods. The average redundant edge of our graph representation is reduced by $10$ times on public datasets and $10$ times on clinical-real datasets, respectively. Our model compression exceeds other deep learning models in the Epilepsy EEG signal classification around $97$ times.

	The rest of this article is structured as follows. 
	We conclude this section by reviewing optimization methods on deep learning and the alternating direction method of multipliers (ADMM). In Section~\ref{sec:data_rep}, we introduce the graph representation of EEG signals, where the model can effectively extract the node relationship features in the sparse graph representation. The proposed sparse spectra graph convolutional neural network (SSGCNet) method is presented in Sections~\ref{sec:method}. In Section~\ref{sec:results}, by public and clinical-real datasets, various experimental results demonstrate the effectiveness and accuracy. Section~\ref{sec:Conclusion} draws the concluding remarks.

	\subsection{Optimization on Deep Learning}
	The optimization of the deep learning EEG signal classification method can usually be divided into data representation based optimization methods and weight pruning methods.
	
	1) For data representation methods, this kind of methods is to represent EEG signals in the time domain or frequency domain~\cite{liu2013model,cai2018epileptic}. The methods can be used to represent different characteristics of EEG signals. However, the relationship between the sampling points of the EEG signal is always ignored~\cite{wang2017eeg238,zou2018complex,zhu2014epileptic}. The graph representation based method has recently attracted much attention~\cite{supriya2018eeg,yang2019complex}. These methods use each data point of the EEG signal as a node of the graph structure, and convert the EEG signal into the graph structure through connection rules~\cite{supriya2021epilepsy}. The graph representation of the signal can effectively represent the characteristics of the relationship between the different sampling points. Since these graph structures have many redundant edges, such graph representation occupies large computation space to store redundancy of weights. For this reason, we design a weighted neighborhood field graph (WNFG) based on the EEG signal graph representation, which can significantly reduce the redundant edges between sampling points.

	2) For model weight pruning methods, the existing methods usually take the model as a whole module~\cite{ye2018progressive,srivastava2014dropout,baldi2013understanding} and optimize redundant weights with small values in the parameter distribution of the model~\cite{zhang2018systematic,li2016pruning,yim2017gift,wang2020seizure}. Some methods are optimized on multilayer perceptrons and convolutional neural networks, and ignore the sparsity property of arising in the deep learning model~\cite{liu2018rethinking,blalock2020state,see2016compression}. Some methods introduce sparse weight pruning into the model, but they methods lack convergence analysis \cite{ye2018progressive,see2016compression}. With the development of optimization methods, variable splitting methods has recently been introduced in the arena of sparsity type problems, where it has become increasingly important due to splitting property~\cite{Splitting2017book,Gao2019ieks,Gao2020autonomous}. The methods such as ADMM~\cite{Boyd2011admm,Gao2017analysis}, augmented Lagrangian splitting (ALS)~\cite{Goldstein2009Split}, and Peaceman-Rachford splitting (PRS)~\cite{He2014peaceman} are efficient methods that can tackle this kind of sparsity problems. They effectively decompose the original problem into a sequence of simple subproblems. We present a new graph convolutional network with based on the ADMM-type splitting and weight pruning strategy.

	\subsection{Alternating Direction Method of Multipliers (ADMM)}
	\label{sec:vs}
	
	This subsection briefly reviews ADMM. Consider a minimization optimization problem
	\begin{equation}\label{eq:vs_function}
		\begin{split}
			\begin{aligned}
				\min_{\mathbf{x}} \, f(\mathbf{x}),  \quad
				\mathrm{s.t.}\, \mathbf{x} \in \mathcal{C},
			\end{aligned}
		\end{split}
	\end{equation}
	where $f({\mathbf{x}})$ is the cost function, and $\mathcal{C}$ is a constraint set. The set $\mathcal{C}$ can be boolean constraints, the set of matrices with rank, or the number of non-zero elements \cite{Splitting2017book}.

	One approach to solving problems of the form \eqref{eq:vs_function} is based on ADMM~\cite{Boyd2011admm,Gao2019ieks}. ADMM is a broadly used variable splitting method for solving constrained optimization problems \cite{Gao2020Variable}. Instead of minimizing the given constrained minimization problem, ADMM can transform it into a unconstrained one. The constraints are then replaced by adding an indicator function. Its main advantage is the splitting ability that splits a complex problem into several simpler subproblems. Mathematically, we first write \eqref{eq:vs_function} as an equality constrained optimization problem
	\begin{equation}\label{eq:vs_function2}
		\begin{split}
			\min_{\mathbf{x}}\, &f(\mathbf{x}) + {g}(\mathbf{z}) \\
			\mathrm{s.t.}\  &\mathbf{x} = \mathbf{z}, 
			\quad g(\mathbf{z})= \begin{cases}
				\mathbf{x}, &   \mathbf{z} \in \mathcal{C}, \\
				\infty, &     \text{otherwise}.
			\end{cases}
		\end{split}
	\end{equation}
	Here, ${g}(\mathbf{z})$ is the indicator function. We then define an augmented Lagrangian function $\mathcal{L}(\mathbf{x},\mathbf{z};\bm{\eta})$ of the minimization problem \eqref{eq:vs_function2} 
	\begin{equation}\label{eq:Lagrangian_function}
		\begin{split}
			\begin{aligned}
				\mathcal{L}(\mathbf{x},\mathbf{z};\bm{\eta}) =   
				f (\mathbf{x}) + g ( \mathbf{z})  
				+ \bm{\eta}^\T(\mathbf{x}-\mathbf{z})
				+ \frac{\rho}{2} \| \mathbf{x}- \mathbf{z}\|^2,
			\end{aligned}
		\end{split}
	\end{equation}
	where $\bm{\eta}$ is a Lagrange multiplier and $\rho$ be a penalty parameter. Given the initial iteration $\mathbf{x}^{(0)}$, $\mathbf{z}^{(0)}$, $\bm{\eta}^{(0)}$, the ADMM iteration has the update steps
	\begin{subequations}
		\label{eq:general-admm}
		\begin{align}
			\label{eq:x-primal-admm}
			\mathbf{x}^{(k+1)} &= \!\mathop{\arg\min}_{\mathbf{x}} f(\mathbf{x}) \!+\! (\bm{\eta}^{(k)})^\T(\mathbf{x} -  \mathbf{z}^{(k)})
			\!+\! \frac{\rho}{2} \left\|\mathbf{x} \!- \! \mathbf{z}^{(k)}\right\|^2, \\
			\label{eq:z-admm}
			\mathbf{z}^{(k+1)} &=\mathop{\arg\min}_{\mathbf{z}} g(\mathbf{z}) 
			+({\bm{\eta}^{(k)}})^\T (\mathbf{x}^{(k+1)} -  \mathbf{z} ) \nonumber \\
			&\quad  \quad \quad + \frac{\rho}{2} \left\|\mathbf{x}^{(k+1)} -  \mathbf{z} \right\|^2, \\
			\label{eq:eta-admm}
			\bm{\eta}^{(k+1)} & = \bm{\eta}^{(k)} +   \rho \left(\mathbf{x}^{(k+1)} - \mathbf{z}^{(k+1)} \right),
		\end{align}
	\end{subequations}
	where $k$ is the iteration number. The minimization problem can be solved by alternating the updates of all the variables. In this article, we apply the ADMM method to the sparsity version of the proposed model, associating with the steps of model weight pruning.

	\begin{figure*}[htbp]
		\vspace{-0cm}
		\setlength{\abovecaptionskip}{-0cm}
		\setlength{\belowcaptionskip}{-0cm}
		\centering \includegraphics[width=\textwidth]{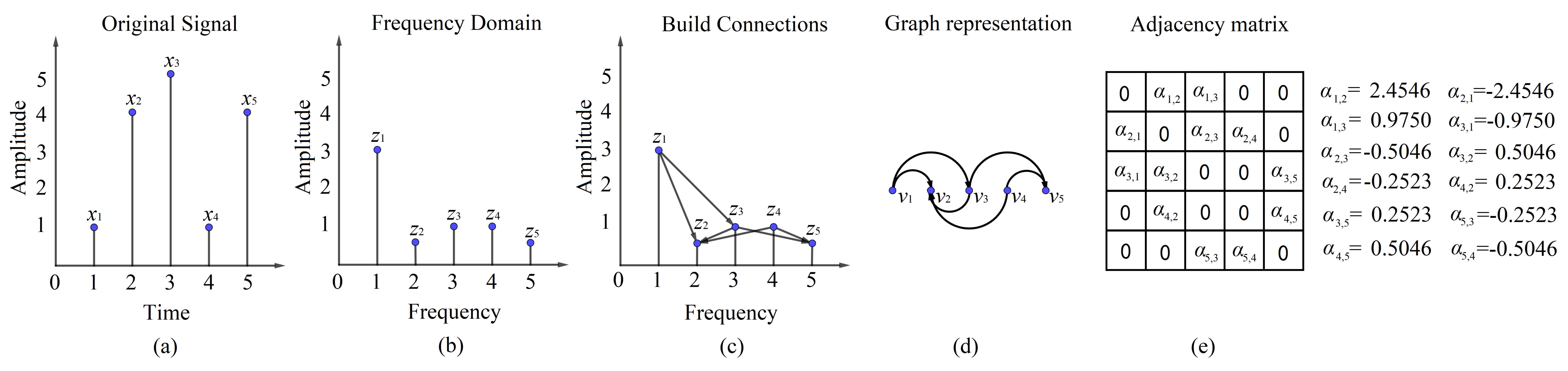}
		\caption{(a) Original signal; (b) the frequency domain signal; (c) the process of building connections; (d) the graph representation of the signal; (e) the adjacency matrix of the graph.
		}
		\label{fig:wog_wnfg}
	\end{figure*}

	\section{Graph Representation}
	\label{sec:data_rep}
	
	In this section, we introduce a new graph representation method arising in the proposed sparse spectra graph convolution network (SSGCNet). Traditional graph representation methods, for example the weighted overlook graph (WOG) method~\cite{wang2020weighted}, have a time complexity of $O(n^2)$. The main reason is that there are huge numbers of redundant edges of the WOG graph structure. We design a lightweight structure to represent the EEG signals.

	Let the vector $\mathbf{x}=\{x_1,x_2,x_3,\ldots,x_n\}$ contain $n$ sampling points. 
	Inspired by our previous work~\cite{wang2020sequential}, we adopt the preprocessing operation of transforming the time series into the frequency domain, given by
	\begin{equation} \label{eq:x_i}
		z_m = \sum_{i=1}^{n}{x_ie^{-\bar{j}\frac{2\pi}{n}im}}, \, m=1,2,\ldots,n.
	\end{equation}
	Here, $z_m$ is the frequency domain of the $i$-th element of the signal $\mathbf{x}$ (see Fig.~\ref{fig:wog_wnfg} (b)), and $\bar{j}$ is the value from imaginary part.
	Such the transform can effectively reduce the intra-class differences between EEG signal samples. 
	
	Then, we represent the frequency domain signal into a graph structure.
	Mathematically, the graph can be defined as $\mathcal{G}=\{\mathcal{V}, \mathcal{E}\}$, where the set of nodes $\mathcal{V}= \{v_1,v_2,v_3,\ldots,v_n\}$ corresponds to the elements of the signal $\mathbf{x}$. The set of edges $\mathcal{E}=\{E_1,E_2,E_3,\ldots,E_n\}$ is related to each node, and each edge $E_i$ of the $i$-th node has the weight set $\{\alpha_{i,1},\alpha_{i,2},\alpha_{i,3},\ldots,\alpha_{i,n}\}$.
	The adjacency matrix $\mathbf{A}$ can be then written as
	\begin{equation}\label{eq:adjacency}
		\mathbf{A}_{i,j}= \alpha_{i,j}. \\
	\end{equation}
	Particularly, if the weighted graph is constructed and has an edge between node $i$ and node $j$, then $\alpha_{i,j}$ is the value of the connection relationship between double notes, otherwise the weight $\alpha_{i,j} = 0$.
	In particular, when it belongs to a weightless graph and exists an edge between nodes, $\alpha_{i,j} = 1$; otherwise, $\alpha_{i,j} = 0$.

	As shown in Fig.~\ref{fig:wog_wnfg} (c), we set the connections of nodes within the specified distance $K$ with $K \ll n$. The time complexity of the graph structure then becomes $O(Kn)$. We now obtain a graph structure with lower complexity. In particular, when $K = 1$, the graph structure of each signal segment is a chain structure. When $K$ is unreasonably small, the deep learning model is difficult to converge, which will cause the loss of the model to oscillate. In our article, the value of $K$ is set in the range $[20,\,25]$, which simplifies the balanced graph structure and improves the stability of the classification accuracy.

	For any two data points $z_i, \,z_j\in \{z_m\}_{m=1}^n$, the distance range between $i$ and $j$ is less than $K$. As shown in Fig.~\ref{fig:wog_wnfg} (d), the connection rule between different points is defined as
	\begin{equation} \label{eq:connection}
		z_i>z_j \,\text{with} \, (|i-j|<K).
	\end{equation}
	The value of the edge between the data points is $\alpha_{i,j}$, which has the following equation 
	\begin{equation}\label{eq:alpha}
		\alpha_{i,j} = \frac{z_i - z_j}{i - j}.
	\end{equation}
	Here, if and only if $z_j$ is not equal to $z_i$, otherwise $\alpha_{i,j}=-\alpha_{j,i}$. After computing all $\alpha_{i,j}$, we obtain the whole adjacency matrix $\mathbf{A}$ of the graph (see Fig.~\ref{fig:wog_wnfg} (e)). 
	
	As mentioned above, the entire adjacent matrix of WOG is full of connected edges, which takes high time complexity. We use positive and negative signs as the connection direction for the direct selection. As a result, our graph structure has an effective edge connection and low computation complexity. This is also consistent with the relationship between EEG signal rhythms, which is adaptable to real-world applications. In the following, we will propose the graph classification method for EEG signals.

	\begin{figure*}[htbp]
		\setlength{\abovecaptionskip}{-0cm}
		\setlength{\belowcaptionskip}{-0cm}
		\centering \includegraphics[width=\textwidth]{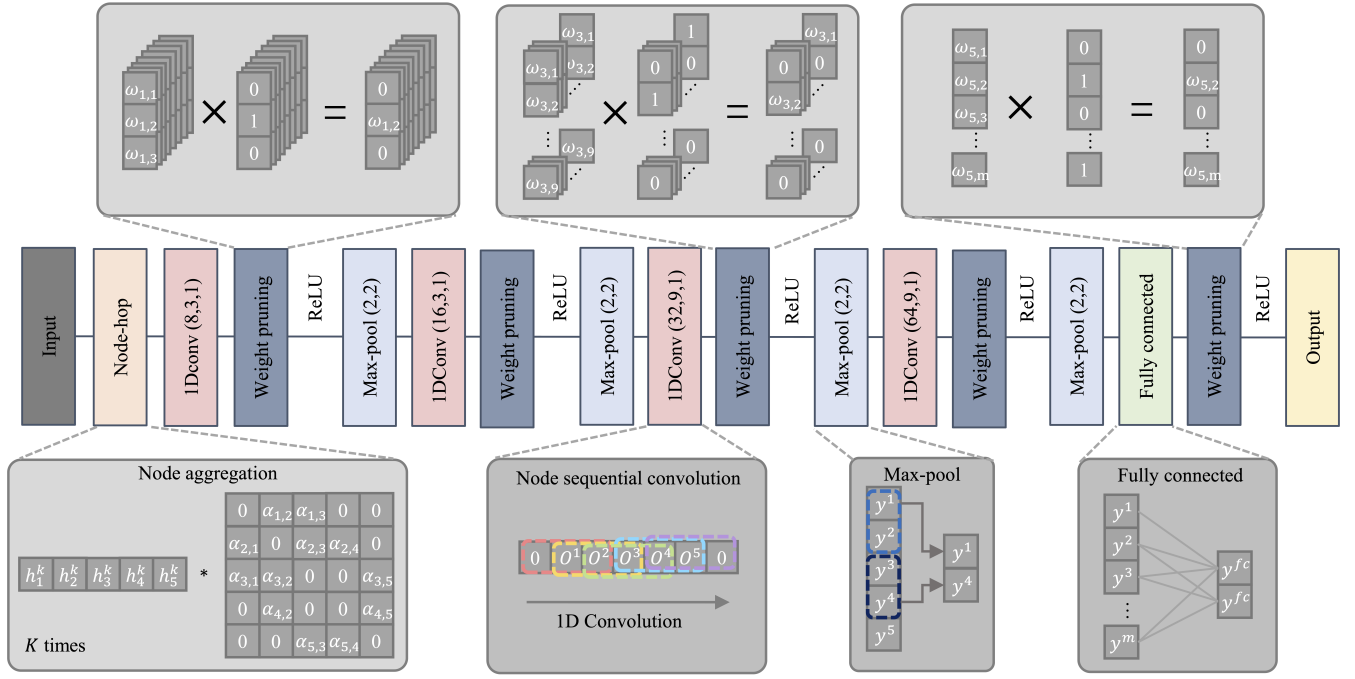}
		\caption{The framework of SSGCNet consists of the following main parts: node aggregation, node sequential convolution, fully connected layer and weight pruning. Note that we only perform weight pruning operations on the node sequential convolution layer and the fully connected layer.}
		\label{fig:network}
	\end{figure*}

	\section{Graph Classification Method}
	\label{sec:method}
	
	In this section, we propose the sparse spectra graph convolutional neural network (SSGCNet) for Epileptic EEG signal classification.

	\subsection{Problem Formulation}
	
	Let $\mathbf{w}_{l} \in \mathbb{R}^{n_w}$ be a weight parameter of the $l$-th layer and $\mathbf{y} \in \mathbb{R}^n$ be the input data. The problem of training an $N$-layer deep learning model can be formulated as:
	\begin{equation}\label{eq:optimization}
		\begin{split}
			\begin{aligned}
				\{\hat{\mathbf{w}},\hat{\mathbf{y}}\}&=\mathop{\arg\min}_{\mathbf{w}_{1:N}}
				f (\mathbf{w}_{1:N}, \mathbf{y}) \\
				{\mathrm{s.t.}}\, 
				&   
				\operatorname{\textbf{card}}(\mathbf{\Omega} \mathbf{w}_l ) \leq \text{const}, \,
				l = 1,\ldots,N, 
			\end{aligned}
		\end{split}
	\end{equation}
	where $\hat{\mathbf{w}}\in \mathbb{R}^{n_w}$ is the optimal weight parameter sequence, $\hat{\mathbf{y}}\in \mathbb{R}^{n_y}$ is the output in deep learning model corresponding to the probability of seizure or non-seizure, $\mathbf{\Omega} \in \mathbb{R}^{n_o \times n_w}$ is an operator, $f$ is the loss function, $\mathbf{w}_{1:N}$ represents the sequence of $\{\mathbf{w}_1,\ldots,\mathbf{w}_N\}$ and $\text{const}$ represents a constant. Our goal is to obtain the output $\hat{\mathbf{y}}$ and the parameters $\hat{\mathbf{w}}$. 
	
	The formulation~\eqref{eq:optimization} permits the usage of kinds of loss functions such as root mean square error, structure similarity index measure, min-max function or cross-entropy type loss function~\cite{krizhevsky2012imagenet}. The settings of the operator $\mathbf{\Omega}$ can be used to represent different pruning methods. For example, when $\mathbf{\Omega}$ is an identity matrix ($n_o = n_w$), the constraint set $\{\operatorname{\textbf{card}}(\mathbf{\Omega} \mathbf{w}_l ) \leq \text{const}\}$ represents the number of non-zero elements of the parameter $\mathbf{w}_l$. When the number is less enough, that means the weight parameters are sparse. When the matrix $\mathbf{\Omega}$ is structured, the weight can be pruned in parallel. It should be noted that, if all the products $\mathbf{\Omega} \mathbf{w}_l$ is out of the constraint sets, the objective becomes training a deep learning model without any weight pruning.
	
	Due to the nonconvexity, it is challenging to solve such the problem, {particularly when the parameters and the signals are in large-scale size}. To address the issue, we introduce the SSGCNet which combines the ADMM-type splitting and weight pruning strategy.
	
	\subsection{The Framework of SSGCNet}
	Here we introduce the framework of SSGCNet. The SSGCNet includes two-hop aggregation operations, four sequential convolutional layers, two fully connected layers, and weight pruning process. The aggregation operation is performed by multiplying the vector by the adjacency matrix. The sequential convolutional layers can accurately extract sequential features in the sequence. The fully connected structure can be equivalent to the readout structure in the graph neural network. The whole framework of our SSGCNet is shown in Fig.~\ref{fig:network}. Although the node aggregation of the graph neural network belongs to a kind of multi-hop operation for all nodes, our graph structure can completely retain the information of graph node aggregation in this process. We construct the aggregated part of the graph nodes as a single module for operations on the convolutional layer and fully connected layer. Then, we use the sparse weight pruning method to optimize the structure of the model, which will be discussed in Section \ref{sec:admm-dl}.

	Mathematically, the graph $\mathcal{G}$ with $n$ nodes is equivalent to the input signal containing $n$ data points. We first obtain the primary aggregation vector of each node and its neighbour nodes through the product of the vector $\mathbf{h}=\{h^{k}_{v_1},h^{k}_{v_2},h^{k}_{v_3},\ldots,h^{k}_{v_n}\}$, where $\mathbf{h}^{k}$ denotes the value of $\mathbf{h}$ at $k$-th iteration.  In particular, when $k = 0$, the vector $\mathbf{h}$ is an all-ones vector. With performing the $k$-hops, we use the vector $\mathbf{h}$ and the adjacency matrix $\mathbf{A}$ to multiply $k$ times. At the $k$-th time, the relative position of each node in the vector remains unchanged. We then have 
	\begin{equation} 
		h^{k}_{v_i} = {h^{k-1}_{v_i} + \sum\limits_{\{u_i\}} h^{k-1}_{u_i}}, \\
	\end{equation}
	where $v_i$ denotes the $i$-th node of the graph $\mathcal{G}$, and $u_i$ denotes the set of all the neighborhood nodes of $v_i$. After obtaining the node aggregation vector, $h^{k}_{v_i}$ is multiplied by each element of the learnable parameter $\bm{\theta} = \{\theta^1,\theta^2,\ldots,\theta^n\}$, given by
	\begin{equation}\label{eq:aggregation}
		o^i = \theta^i \cdot h^{k}_{v_i},
	\end{equation}
	which effectively weighs the importance of different node information in backpropagation. Note that the operation does not change the node order of the aggregation vector.
	
	The aggregation vector is convolved through the one-dimensional convolution kernel. The size of the convolution kernel is the receptive field of the node range. One-dimensional convolution uses the operation of zero padding on both sides. Note that the size of the output vector is consistent with the size of the input vector after the convolution is completed by  
	\begin{equation} \label{eq:convolution}
		y^i = \sigma\left({\sum\limits_{i=1}^{n} {w}^{u}_l\cdot o^{i+u+1}+b_l}\right),
	\end{equation}
	where $\sigma(\cdot)$ is the ReLU activation function~\cite{GlorotBB11}, $u$ is the kernel size, ${w}^{u}_l$ is the convolution kernel parameter in $l$-th layer, and $b_l$ is the bias in the $l$-th layer.
	
	After that, we perform a max-pooling operation on the two results of the convolutional layer, expressed as
	\begin{equation}
		y^{i} = \max \left(y^{2i-1}, \, y^{2i}\right),
	\end{equation}
	and then we input the signal to the fully connected layer by
	\begin{equation} \label{eq:y_fc}
		{y}_i^{fc} =  \sigma\left( \sum\limits_{i=1}^{n} {w}_l\cdot y^i + {b}_l\right).
	\end{equation}
	Here, ${y}^{fc}$ is the output of the SSGCNet. Note that we use Adam optimizer to optimize the model~\cite{kingma2014adam}, and the function $f$ is a cross-entropy function, defined by
	\begin{equation} \label{eq:cross-entropy}
		\begin{aligned}
			f(\mathbf{w}_{1:N},\mathbf{y}) = - \log \left(\frac{\exp^{y^{fc}_i}}{ \sum_{i=1}^c \exp^{y^{fc}_i}}\right).
		\end{aligned}
	\end{equation}
	
	Until now, we can obtain the output $\hat{\mathbf{y}}$ by computing~\eqref{eq:cross-entropy}. Our model is suitable for signals that can be well-represented in graph nodes. However, {as the scale of graph nodes and the computational complexity of the networks increase, the methods lack strong performance guarantee. In the following, we leverage the sparse redundancy in the number of weights of SSGCNet.}
	
	\subsection{ADMM Weight Pruning Method}
	\label{sec:admm-dl}

	Using the ADMM-type splitting method~\cite{Boyd2011admm,Gao2017analysis}, we introduce an auxiliary variable sequence $\{\mathbf{z}\}_{l=1}^N$ and an indicator function $g(\cdot)$. The constrained problem \eqref{eq:optimization} can be reformulated mathematically as 
	\begin{equation}\label{eq:optimization1}
		\begin{split}
			\begin{aligned}
				\mathop{\min}_{\mathbf{w}_{1:N}}
				&   \big[f \big(\mathbf{w}_{1:N},\mathbf{y}\big) + \sum_{l=1}^{N} g ( \mathbf{z}_l) 
				\big] \\
				{\mathrm{s.t.}}\ &  \mathbf{z}_l =  \mathbf{\Omega} \mathbf{w}_l, 
				\,l = 1,\ldots,N,
			\end{aligned}
		\end{split}
	\end{equation}
	where the indicator function is defined by
	\begin{equation}\label{eq:g_indicator}
		\begin{split}
			\begin{aligned}
				g(\mathbf{z})= \begin{cases}
					\mathbf{0}, &    \operatorname{\textbf{card}}(\mathbf{z} ) \leq \text{const}, \\
					\infty, &     \text{otherwise}. 
				\end{cases}
			\end{aligned}
		\end{split}
	\end{equation}

	The augmented Lagrangian function associated with \eqref{eq:optimization1} is written as
	\begin{equation}\label{eq:Lagrangian_func}
		\begin{split}
			\begin{aligned}
				&\mathcal{L}(\mathbf{w}_{1:N},\mathbf{z}_{1:N};\bm{\eta}_{1:N}) =   
				f \big(\mathbf{w}_{1:N},\mathbf{y}\big) + \sum_{l=1}^{N} g ( \mathbf{z}_l)   \\ 
				&  + \sum_{l=1}^{N}\bm{\eta}_{l}^\T (\mathbf{z}_{l}- \mathbf{\Omega}\mathbf{w}_{l})
				+ \sum_{l=1}^{N} \frac{\rho}{2} \| \mathbf{z}_{l}- \mathbf{\Omega}\mathbf{w}_{l}\|_2^2 ,
			\end{aligned}
		\end{split}
	\end{equation}
	where $\bm{\eta}_l$ is the Lagrange multiplier, $\rho$ is a penalty parameter, and the vector norm $\left \|\cdot\right \|_2$ is the standard $\ell_2$-norm. Starting at $\mathbf{w}_{l:N} = \mathbf{w}_{l:N}^{(k)}$, $\mathbf{z}_{l:N}= \mathbf{z}_{l:N}^{(k)}$, and $\bm{\eta}_{l:N} = \bm{\eta}_{l:N}^{(k)}$, the iteration steps of ADMM become
	\begin{subequations}
		\begin{align} \label{eq:wl}
			&\mathbf{w}_{1:N}^{(k+1)} = \mathop{\arg\min}_{\mathbf{w}_{1:N}}\mathcal{L}(\mathbf{w}_{1:N},\mathbf{z}_{1:N}^{(k)};\bm{\eta}^{(k)}_{1:N}), \\
			\label{eq:zl}
			&\mathbf{z}_{l}^{(k+1)}= \mathop{\arg\min}_{\mathbf{z}_{l}}\mathcal{L}(\mathbf{w}_{l}^{(k+1)},\mathbf{z}_{l};\bm{\eta}_{l}^{(k)}), \\ 
			\label{eq:etal}
			&\bm{\eta}_{l}^{(k+1)}= \bm{\eta}_{l}^{(k)}+ \rho( \mathbf{z}_{l}^{(k+1)} - \mathbf{\Omega}\mathbf{w}_{l}^{(k+1)}), 
		\end{align}
	\end{subequations}
	which objective is to find a stationary point $(\hat{\mathbf{w}},\hat{\mathbf{z}},\hat{\bm{\eta}})$. The $\mathbf{z}_{l}$ and $\bm{\eta}_{l}$ subproblems compute the updates for each layer. The main benefit of combing with ADMM is that each auxiliary variable $\mathbf{z}_l$ can split the nonconvex sets arising in the weight pruning problems. 
	
	We propose the solution of the $\mathbf{w}_{1:N}$-subproblem. We can write the subproblem in \eqref{eq:wl} as 
	\begin{equation}\label{eq:w-b-subproblem}
		\begin{split}
			\begin{aligned}
				&\mathbf{w}_{1:N}^{(k+1)} =  \mathop{\arg\min}_{\mathbf{w}_{1:N}}   
				f (\mathbf{w}_{1:N},\mathbf{y})   \\
				&\,	+\sum_{l=1}^{N} (\bm{\eta}^{(k)}_{l})^\T \big( \mathbf{z}_{l}^{(k)} - \mathbf{\Omega}\mathbf{w}_{l}\big) + \sum_{l=1}^{N} \frac{\rho}{2} \left\| \mathbf{z}_{l}^{(k)} - \mathbf{\Omega}\mathbf{w}_{l}  \right\|_2^2.
			\end{aligned}
		\end{split}
	\end{equation}
	The function $f$ is the cross-entropy function (see eq.~\eqref{eq:cross-entropy}), and the second term is a square of $L_2$ regularization. Thus, the sum of both functions is differentiable. We then solve \eqref{eq:w-b-subproblem} by computing the $l$-th layer weight
	\begin{equation}\label{eq:w-solution} 
		\begin{split}
			\begin{aligned}
				\mathbf{w}_{l}^{(k+1)} \!\!=\!\! 
				\left\{
				\begin{array}{lr} 
					\!\! \!\!(\frac{\exp^{y^{i}}}{ \sum_{i=1}^n \exp^{y^{i}}} \!-\! \mathbf{I}) \mathbf{y} \!-\!\rho \mathbf{\Omega}   (\mathbf{z}_{l}^{(k)}- \mathbf{\Omega}\mathbf{w}_{l} \!+\! \frac{\bm{\eta}^{(k)}_{l}}{\rho}), \, i = fc,  \\
					\!\!\!\! \frac{\exp^{y^{i}}}{ \sum_{i=1}^n \exp^{y^{i}}} \mathbf{y} - \rho \mathbf{\Omega}  (\mathbf{z}^{(k)}_{l}- \mathbf{\Omega}\mathbf{w}_{l} +\frac{\bm{\eta}_{l}^{(k)}}{\rho}),\, i \neq fc.  
				\end{array}
				\right.
			\end{aligned}
		\end{split}
	\end{equation}

	For solving each $\mathbf{z}_{l}$-subproblem, we write
	\begin{equation}\label{eq:z-v-subproblem}
		\begin{split}
			\begin{aligned}
				\mathbf{z}_{l}^{(k+1)} =\mathop{\arg\min}_{\mathbf{z}_{l}}   
				&g (\mathbf{z}_l) 
				+ \sum_{l=1}^{N} (\bm{\eta}^{(k)}_{l})^\T \big( \mathbf{z}_{l} - \mathbf{\Omega}\mathbf{w}_{l}^{(k+1)}\big) \\
				&\quad + \sum_{l=1}^{N} \frac{\rho}{2} \left\| \mathbf{z}_{l} - \mathbf{\Omega}\mathbf{w}_{l}^{(k+1)}  \right\|_2^2.
			\end{aligned}
		\end{split}
	\end{equation}
	The solution of the $\mathbf{z}_{l}$-subproblem involves keeping the $\text{const}$ largest magnitude elements and zeros out the rest. Inspired by~\cite{Boyd2011admm}, we can obtain
	\begin{equation}\label{eq:z-solution}
		\begin{split}
			\begin{aligned}
				\mathbf{z}_{l}^{(k+1)} = \operatorname{\Pi}(\mathbf{\Omega}\mathbf{w}_{l}^{(k+1)}  - {\bm{\eta}^{(k+1)}_{l}}/{\rho}),
			\end{aligned}
		\end{split}
	\end{equation}
	where $\Pi(\cdot)$ is projection operator onto the constraint set $\{ \mathbf{x} | \operatorname{\textbf{card}}(\mathbf{x}) \leq \text{const} \}$. Note that the parameter $\text{const} $ controls the trade-off between least-squares errors and desired carnality. 
	
	After all the iteration, we can obtain the stationary point $(\hat{\mathbf{w}},\hat{\mathbf{z}},\hat{\bm{\eta}})$. 
	Here, we can use other splitting methods such as ALS~\cite{Goldstein2009Split} and PRS~\cite{He2014peaceman} to compute the objective function \eqref{eq:optimization1}. When the objective function~\eqref{eq:Lagrangian_func} is convex, then the function globally converges to the global point $(\hat{\mathbf{w}},\hat{\mathbf{z}},\hat{\bm{\eta}})$~\cite{Boyd2011admm}. However, the function here is nonconvex. We need strict assumption conditions that ensure the convergence results, for example, proper choices of $\rho$ and $\mathbf{\Omega}$~\cite{Gao2020Variable_new}. In the following, we will prove the convergence results.

	\subsection{Convergence Analysis}
	
	Since pruning a deep neural network is nonconvex problem, even when both the loss function and constraints are convex, we will consider the nonconvexity in this article. Before that, we make the following definition.  
	
	\begin{definition}[Prox-regularity \cite{Rockafellar1998Variational,Poliquin1996Prox}]
		\label{lemma:pro-regularity}
		If $f(\mathbf{x},\mathbf{y})$ is is  prox-regular with a positive constant $M$, then for any $\mathbf{x}_1$, $\mathbf{x}_2 $ in a neighborhood of $\mathbf{x}$, there exists $M> 0$ such that  
		\begin{equation}\label{eq:prox-regular}
			\begin{split}
				\begin{aligned}
					f(\mathbf{x}_1,\mathbf{y}) &- f(\mathbf{x}_2,\mathbf{y})   \\
					&\geq- \frac{M}{2} \|\mathbf{x}_1 - \mathbf{x}_2 \|^2  +  \langle \partial_{\mathbf{x}} f(\mathbf{x},\mathbf{y}), \mathbf{x}_1 - \mathbf{x}_2\rangle.
				\end{aligned}
			\end{split}
		\end{equation}
	\end{definition}
	
	We then make the following assumptions for establishing the convergence.
	\begin{assumption}
		\label{assump:prox-regular}
		The function $f(\mathbf{w}_{1:N},\mathbf{y})$ is prox-regular at $\mathbf{w}_{1:N}$ with constants $M_w$.	
	\end{assumption}
	
	\begin{assumption}
		\label{assump:omega}
		$\mathbf{\Omega}$ is full-column rank with 
		\[\mathbf{\Omega} \mathbf{\Omega}^\T \succeq \kappa^2 \mathbf{I}.\]
	\end{assumption}	
	
	Assumption \ref{assump:prox-regular} can be used to bound the partial of the loss function $f(\mathbf{w}_{1:N},\mathbf{y})$. According to Assumption \ref{assump:prox-regular}, we have the following equations 
	\begin{equation}\label{eq:prox-regular-case}
		\begin{split}
			\begin{aligned}
				&f(\mathbf{w}_{1:N}^{(k+1)},\mathbf{y}) - f(\mathbf{w}_{1:N}^{(k)},\mathbf{y}) 
				\geq -\frac{M_w}{2} \left\|\mathbf{w}_{1:N}^{(k+1)} - \mathbf{w}_{1:N}^{(k)} \right\|^2 \\ & \quad +  \langle \partial_{\mathbf{w}} f(\mathbf{w}_{1:N}^{(k+1)},\mathbf{y}), \mathbf{w}_{1:N}^{(k+1)} - \mathbf{w}_{1:N}^{(k)}\rangle.\\	
			\end{aligned}
		\end{split}
	\end{equation}
	Due to the optimality condition on $\mathcal{L}(\mathbf{w}_{1:N},\mathbf{z}_{1:N};\bm{\eta}_{1:N})$,we can obtain
	\begin{equation}
		\label{eq:gradient_lang}
		\begin{split}
			\begin{aligned}
				&\partial_{\mathbf{w}}\mathcal{L}(\mathbf{w}_{1:N}^{(k+1)},\mathbf{z}_{1:N}^{(k)};\bm{\eta}_{1:N}^{(k)}) = 0. \\
			\end{aligned}
		\end{split} 
	\end{equation}
	which implies that 
	\begin{equation}
		\label{eq:first_order_x}
		\begin{split}
			\begin{aligned}
				\partial f_{\mathbf{w}}(\mathbf{w}_{1:N}^{(k+1)}, \mathbf{y}) =	\mathbf{\Omega}_{1:N}^\T	\bm{\eta}_{1:N}^{(k)}	+ \rho \mathbf{\Omega}_{1:N}^\T ( \mathbf{z}_{1:N}^{(k)} - \mathbf{\Omega}_{1:N} \mathbf{w}_{1:N}^{(k+1)}),
			\end{aligned}
		\end{split} 
	\end{equation}
	where $\mathbf{\Omega}_{1:N} =  \operatorname{blkdiag}(\mathbf{\Omega},\mathbf{\Omega},\ldots,\mathbf{\Omega})$ with the block diagonal matrix operator $\operatorname{blkdiag}(\cdot)$. Now, we are ready to proof that the sequence $\mathcal{L}(\mathbf{w}_{1:N}^{(k)},\mathbf{z}_{1:N}^{(k)};\bm{\eta}_{1:N}^{(k)}) $ is monotonically nonincreasing when updating the variables $\mathbf{w}_{1:N}$.
	
	\begin{lemma}
		\label{lemma:wb-sub}
		Let Assumptions \ref{assump:prox-regular} and \ref{assump:omega} be satisfied. Also let $\{\mathbf{w}^{(k)}_{1:N},\mathbf{z}^{(k)}_{1:N};\bm{\eta}^{(k)}_{1:N}\}$ be the iterative sequence. If $\rho \geq \frac{ M_{w}}{\kappa^2} $, then the sequence $\mathcal{L}(\mathbf{w}^{(k)}_{1:N},\mathbf{z}^{(k)}_{1:N};\bm{\eta}^{(k)}_{1:N}) $ is nonincreasing with the updates of $\mathbf{w}_{1:N}$.
	\end{lemma}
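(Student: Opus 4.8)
The plan is to establish the sufficient-decrease estimate
\[
\mathcal{L}(\mathbf{w}_{1:N}^{(k)},\mathbf{z}_{1:N}^{(k)};\bm{\eta}_{1:N}^{(k)}) - \mathcal{L}(\mathbf{w}_{1:N}^{(k+1)},\mathbf{z}_{1:N}^{(k)};\bm{\eta}_{1:N}^{(k)}) \geq \frac{\rho\kappa^2 - M_w}{2}\big\|\mathbf{w}_{1:N}^{(k+1)} - \mathbf{w}_{1:N}^{(k)}\big\|^2,
\]
whose right-hand side is nonnegative exactly when $\rho \geq M_w/\kappa^2$, which is precisely the claimed monotonicity under the $\mathbf{w}_{1:N}$-update. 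Throughout this step the $\mathbf{z}$- and $\bm{\eta}$-blocks are frozen at iteration $k$, so only the $f$-term together with the terms $\bm{\eta}_l^{(k)\T}(\mathbf{z}_l^{(k)} - \mathbf{\Omega}\mathbf{w}_l)$ and $\tfrac{\rho}{2}\|\mathbf{z}_l^{(k)} - \mathbf{\Omega}\mathbf{w}_l\|_2^2$ of \eqref{eq:Lagrangian_func} actually move.

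First I would expand the left-hand difference term by term. For the quadratic penalty I would apply the polarization identity $\|a\|^2 - \|b\|^2 = \|a-b\|^2 + 2\langle b, a-b\rangle$ with $a = \mathbf{z}_l^{(k)} - \mathbf{\Omega}\mathbf{w}_l^{(k)}$ and $b = \mathbf{z}_l^{(k)} - \mathbf{\Omega}\mathbf{w}_l^{(k+1)}$, so that $a-b = \mathbf{\Omega}(\mathbf{w}_l^{(k+1)} - \mathbf{w}_l^{(k)})$; stacking over layers with $\mathbf{\Omega}_{1:N} = \operatorname{blkdiag}(\mathbf{\Omega},\ldots,\mathbf{\Omega})$ this yields the term $\tfrac{\rho}{2}\|\mathbf{\Omega}_{1:N}(\mathbf{w}_{1:N}^{(k+1)} - \mathbf{w}_{1:N}^{(k)})\|^2$ plus a cross term $\rho\langle \mathbf{z}_{1:N}^{(k)} - \mathbf{\Omega}_{1:N}\mathbf{w}_{1:N}^{(k+1)},\, \mathbf{\Omega}_{1:N}(\mathbf{w}_{1:N}^{(k+1)} - \mathbf{w}_{1:N}^{(k)})\rangle$. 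For the loss difference $f(\mathbf{w}_{1:N}^{(k)},\mathbf{y}) - f(\mathbf{w}_{1:N}^{(k+1)},\mathbf{y})$ I would invoke Assumption~\ref{assump:prox-regular} via Definition~\ref{lemma:pro-regularity} centred at $\mathbf{w}_{1:N}^{(k+1)}$ (i.e.\ \eqref{eq:prox-regular-case} with the two arguments interchanged), obtaining the lower bound $-\tfrac{M_w}{2}\|\mathbf{w}_{1:N}^{(k+1)} - \mathbf{w}_{1:N}^{(k)}\|^2 + \langle \partial_{\mathbf{w}} f(\mathbf{w}_{1:N}^{(k+1)},\mathbf{y}),\, \mathbf{w}_{1:N}^{(k)} - \mathbf{w}_{1:N}^{(k+1)}\rangle$, and then substitute the first-order optimality condition \eqref{eq:first_order_x} for the $\mathbf{w}_{1:N}$-subproblem, $\partial_{\mathbf{w}} f(\mathbf{w}_{1:N}^{(k+1)},\mathbf{y}) = \mathbf{\Omega}_{1:N}^\T\bm{\eta}_{1:N}^{(k)} + \rho\,\mathbf{\Omega}_{1:N}^\T(\mathbf{z}_{1:N}^{(k)} - \mathbf{\Omega}_{1:N}\mathbf{w}_{1:N}^{(k+1)})$, into this inner product. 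The decisive bookkeeping point is that after the substitution the two $\bm{\eta}_{1:N}^{(k)}$-inner-products cancel and the $\rho$-weighted cross term coming out of $f$ cancels the cross term produced by the polarization identity, leaving only $\tfrac{\rho}{2}\|\mathbf{\Omega}_{1:N}(\mathbf{w}_{1:N}^{(k+1)} - \mathbf{w}_{1:N}^{(k)})\|^2 - \tfrac{M_w}{2}\|\mathbf{w}_{1:N}^{(k+1)} - \mathbf{w}_{1:N}^{(k)}\|^2$. Applying Assumption~\ref{assump:omega} in the coercive form $\|\mathbf{\Omega}_{1:N}\mathbf{v}\|^2 \geq \kappa^2\|\mathbf{v}\|^2$ (which $\mathbf{\Omega}_{1:N}$ inherits from $\mathbf{\Omega}$) then gives the displayed estimate and hence the lemma.

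I expect the main obstacle to be the sign bookkeeping in the cancellation step: the polarization identity, the sign convention of the multiplier term in \eqref{eq:Lagrangian_func}, and the direction in which prox-regularity is applied must all line up for the $\bm{\eta}$- and cross-terms to vanish, and a single misplaced sign would spoil the argument. A secondary subtlety is that prox-regularity in Definition~\ref{lemma:pro-regularity} is only local, so strictly speaking one must argue — or, as the lemma's hypotheses implicitly do, simply posit — that the successive iterates $\mathbf{w}_{1:N}^{(k)}$ and $\mathbf{w}_{1:N}^{(k+1)}$ lie in the neighbourhood on which \eqref{eq:prox-regular-case} holds; and the bound $\|\mathbf{\Omega}\mathbf{w}\| \geq \kappa\|\mathbf{w}\|$ I use is the reading of Assumption~\ref{assump:omega} consistent with $\mathbf{\Omega}$ being full column rank. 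Everything else is routine algebra.
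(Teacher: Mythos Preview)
Your proposal is correct and follows essentially the same approach as the paper's own proof in the appendix: expand the Lagrangian difference, apply a polarization identity to the quadratic penalty, invoke the first-order optimality condition \eqref{eq:first_order_x} and the prox-regularity inequality \eqref{eq:prox-regular-case} to cancel the multiplier and cross terms, and finish with the coercivity bound from Assumption~\ref{assump:omega}. The only cosmetic difference is that you work with $\mathcal{L}^{(k)}-\mathcal{L}^{(k+1)}$ whereas the paper expands $\mathcal{L}^{(k+1)}-\mathcal{L}^{(k)}$; your orientation makes the ``nonincreasing'' conclusion immediate and your sign bookkeeping is in fact tidier than the paper's, which carries a couple of sign slips through the intermediate display.
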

	
	\begin{proof}
		See Appendix \ref{sec:proof}.
	\end{proof}
	\begin{lemma}
		\label{lemma:nonincreasing}
		
		Let assumptions of Lemma \ref{lemma:wb-sub} be satisfied. The sequence $\mathcal{L}(\mathbf{w}^{(k)}_{1:N},\mathbf{z}^{(k)}_{1:N};\bm{\eta}^{(k)}_{1:N})$ is monotonically nonincreasing. 
	\end{lemma}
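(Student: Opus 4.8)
The plan is to decompose the change of the augmented Lagrangian \eqref{eq:Lagrangian_func} over one complete ADMM cycle into the three contributions of the updates \eqref{eq:wl}, \eqref{eq:zl}, \eqref{eq:etal}, and to show that their sum is nonpositive. Suppressing the subscript $1:N$ and writing $\mathcal{L}^{(k)}=\mathcal{L}(\mathbf{w}^{(k)},\mathbf{z}^{(k)};\bm{\eta}^{(k)})$, I would split
\[
\mathcal{L}^{(k+1)}-\mathcal{L}^{(k)}=\Delta_{\mathbf{w}}^{(k)}+\Delta_{\mathbf{z}}^{(k)}+\Delta_{\bm{\eta}}^{(k)},
\]
where $\Delta_{\mathbf{w}}^{(k)}=\mathcal{L}(\mathbf{w}^{(k+1)},\mathbf{z}^{(k)};\bm{\eta}^{(k)})-\mathcal{L}(\mathbf{w}^{(k)},\mathbf{z}^{(k)};\bm{\eta}^{(k)})$, $\Delta_{\mathbf{z}}^{(k)}=\mathcal{L}(\mathbf{w}^{(k+1)},\mathbf{z}^{(k+1)};\bm{\eta}^{(k)})-\mathcal{L}(\mathbf{w}^{(k+1)},\mathbf{z}^{(k)};\bm{\eta}^{(k)})$, and $\Delta_{\bm{\eta}}^{(k)}=\mathcal{L}(\mathbf{w}^{(k+1)},\mathbf{z}^{(k+1)};\bm{\eta}^{(k+1)})-\mathcal{L}(\mathbf{w}^{(k+1)},\mathbf{z}^{(k+1)};\bm{\eta}^{(k)})$. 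The whole argument then reduces to showing $\Delta_{\mathbf{w}}^{(k)}+\Delta_{\mathbf{z}}^{(k)}+\Delta_{\bm{\eta}}^{(k)}\le 0$ under the hypotheses of Lemma~\ref{lemma:wb-sub}.

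The two easy terms come first. The quantity $\Delta_{\mathbf{w}}^{(k)}$ is exactly what Lemma~\ref{lemma:wb-sub} controls: under Assumptions~\ref{assump:prox-regular} and \ref{assump:omega} and $\rho\ge M_w/\kappa^2$, combining the prox-regularity inequality \eqref{eq:prox-regular-case} with the stationarity condition \eqref{eq:first_order_x} yields a sufficient-decrease bound of the form $\Delta_{\mathbf{w}}^{(k)}\le-\tfrac12\big(\rho\kappa^2-M_w\big)\|\mathbf{w}^{(k+1)}-\mathbf{w}^{(k)}\|^2\le 0$. The term $\Delta_{\mathbf{z}}^{(k)}$ is nonpositive because each $\mathbf{z}_l^{(k+1)}$ is a global minimizer of $\mathcal{L}$ in $\mathbf{z}_l$ (the Euclidean projection \eqref{eq:z-solution}), while $\mathbf{z}_l^{(k)}$ already lies in the cardinality set, so $g(\mathbf{z}_l^{(k)})=0$ and $\mathbf{z}_l^{(k)}$ is a feasible competitor; optimality of the projection gives $\Delta_{\mathbf{z}}^{(k)}\le 0$ directly, and I would in addition record the stronger bound $\Delta_{\mathbf{z}}^{(k)}\le-\tfrac{\rho}{2}\|\mathbf{z}^{(k+1)}-\mathbf{z}^{(k)}\|^2$ for use against $\Delta_{\bm{\eta}}^{(k)}$.

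The crux is the dual-ascent term $\Delta_{\bm{\eta}}^{(k)}$, which is nonnegative: since $\bm{\eta}_l^{(k+1)}-\bm{\eta}_l^{(k)}=\rho\big(\mathbf{z}_l^{(k+1)}-\mathbf{\Omega}\mathbf{w}_l^{(k+1)}\big)$ by \eqref{eq:etal}, a one-line computation gives $\Delta_{\bm{\eta}}^{(k)}=\tfrac1\rho\|\bm{\eta}_{1:N}^{(k+1)}-\bm{\eta}_{1:N}^{(k)}\|^2$. To absorb it I would bound the dual increment by the primal increments. Rearranging the $\mathbf{w}$-stationarity \eqref{eq:first_order_x} with \eqref{eq:etal} gives $\mathbf{\Omega}_{1:N}^\T\bm{\eta}_{1:N}^{(k+1)}=\partial_{\mathbf{w}}f(\mathbf{w}_{1:N}^{(k+1)},\mathbf{y})+\rho\,\mathbf{\Omega}_{1:N}^\T(\mathbf{z}_{1:N}^{(k+1)}-\mathbf{z}_{1:N}^{(k)})$; subtracting the same identity at iteration $k$, using $\mathbf{\Omega}\mathbf{\Omega}^\T\succeq\kappa^2\mathbf{I}$ to get $\|\bm{\eta}^{(k+1)}-\bm{\eta}^{(k)}\|\le\kappa^{-1}\|\mathbf{\Omega}_{1:N}^\T(\bm{\eta}^{(k+1)}-\bm{\eta}^{(k)})\|$, and invoking prox-regularity to bound $\|\partial_{\mathbf{w}}f(\mathbf{w}^{(k+1)},\mathbf{y})-\partial_{\mathbf{w}}f(\mathbf{w}^{(k)},\mathbf{y})\|$ by a constant multiple of $\|\mathbf{w}^{(k+1)}-\mathbf{w}^{(k)}\|$, I would arrive at $\|\bm{\eta}^{(k+1)}-\bm{\eta}^{(k)}\|^2\le\kappa^{-2}\big(C_1\|\mathbf{w}^{(k+1)}-\mathbf{w}^{(k)}\|^2+C_2\|\mathbf{z}^{(k+1)}-\mathbf{z}^{(k)}\|^2+C_3\|\mathbf{z}^{(k)}-\mathbf{z}^{(k-1)}\|^2\big)$ with $C_1,C_2,C_3$ explicit in $M_w$ and $\rho$. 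Feeding this into $\mathcal{L}^{(k+1)}-\mathcal{L}^{(k)}\le\Delta_{\mathbf{w}}^{(k)}+\Delta_{\mathbf{z}}^{(k)}+\tfrac1\rho\|\bm{\eta}^{(k+1)}-\bm{\eta}^{(k)}\|^2$ then leaves a quadratic form in the increments that is nonpositive once $\rho$ is in the stated regime, which is why the threshold is tied to the ratio $M_w/\kappa^2$.

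I expect the dual-bound step to be the main obstacle. Because $g$ is the indicator of the nonconvex cardinality set, it is neither smooth nor convex, so $\bm{\eta}^{(k+1)}-\bm{\eta}^{(k)}$ cannot be controlled through the $\mathbf{z}$-block at all; the control must be routed entirely through the smooth block $f$ via \eqref{eq:first_order_x}, which is precisely where Assumption~\ref{assump:omega} ($\kappa$) and Assumption~\ref{assump:prox-regular} ($M_w$) enter jointly. A secondary subtlety is that this route couples in the previous increment $\mathbf{z}^{(k)}-\mathbf{z}^{(k-1)}$, a ``second-difference'' effect; the clean way to handle it is to establish monotonicity for a Lyapunov functional augmented by a $\tfrac{C_3}{\rho\kappa^2}\|\mathbf{z}^{(k)}-\mathbf{z}^{(k-1)}\|^2$ term (which agrees with $\mathcal{L}$ at a stationary point), so that the extra term telescopes. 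One must also assume $\mathbf{z}^{(0)}$ feasible so that the $\Delta_{\mathbf{z}}^{(0)}$ argument applies; for $k\ge1$ feasibility of $\mathbf{z}^{(k)}$ is automatic from \eqref{eq:z-solution}.
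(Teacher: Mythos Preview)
Your three-term decomposition and the computation $\Delta_{\bm{\eta}}^{(k)}=\tfrac{1}{\rho}\|\bm{\eta}_{1:N}^{(k+1)}-\bm{\eta}_{1:N}^{(k)}\|^2$ coincide with the paper, but from that point the two arguments diverge sharply. The paper's proof is only a few lines: it records the dual identity \eqref{eq:zeta_sub}, adds it to the estimate of Lemma~\ref{lemma:wb-sub} to obtain \eqref{eq:non_increasing}, and then invokes $\rho\ge M_w/\kappa^2$. It does \emph{not} separately treat the $\mathbf{z}$-update, does \emph{not} bound $\|\bm{\eta}^{(k+1)}-\bm{\eta}^{(k)}\|$ through the smooth-block stationarity, and introduces no auxiliary Lyapunov term.

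Your route---control the dual increment via \eqref{eq:first_order_x}, use Assumption~\ref{assump:omega} to pass from $\mathbf{\Omega}_{1:N}^\T\bm{\eta}$ back to $\bm{\eta}$, and telescope the leftover $\|\mathbf{z}^{(k)}-\mathbf{z}^{(k-1)}\|^2$ with an augmented Lyapunov functional---is the standard nonconvex-ADMM mechanism and is substantially more careful than what the paper records. In fact the extra work you propose is exactly the content a complete argument needs: as written, the paper's combined estimate \eqref{eq:non_increasing} carries a $\ge$ sign, keeps the positive term $\tfrac{1}{\rho}\|\bm{\eta}^{(k+1)}-\bm{\eta}^{(k)}\|^2$ on the right, and the last displayed line literally reads $\mathcal{L}^{(k+1)}-\mathcal{L}^{(k)}\ge 0$, which is the opposite of ``nonincreasing''; the $\Delta_{\mathbf{z}}^{(k)}$ contribution is also silently dropped. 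So your absorption of the dual ascent via the primal increments is not redundant relative to the paper---it is the missing step. One caveat on your side: the sharpened bound $\Delta_{\mathbf{z}}^{(k)}\le -\tfrac{\rho}{2}\|\mathbf{z}^{(k+1)}-\mathbf{z}^{(k)}\|^2$ relies on a three-point inequality that holds for projection onto convex sets but not automatically for the nonconvex cardinality constraint; $\Delta_{\mathbf{z}}^{(k)}\le 0$ is immediate, but the extra quadratic margin needs its own justification (e.g., prox-regularity of the indicator for $\rho$ large enough).
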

	
	\begin{proof}
		For the $\bm{\eta}_{1:N}$-subproblem, we obtain 
		\begin{equation}
			\label{eq:zeta_sub}
			\begin{split}
				\begin{aligned}
					& \mathcal{L}(\mathbf{w}^{(k+1)}_{1:N},\mathbf{z}^{(k+1)}_{1:N};\bm{\eta}^{(k+1)}_{1:N})
					- 
					\mathcal{L}(\mathbf{w}^{(k+1)}_{1:N},\mathbf{z}^{(k+1)}_{1:N};\bm{\eta}^{(k)}_{1:N}) \\
					&  =  
					\langle	\bm{\eta}_{1:N}^{(k+1)}-\bm{\eta}_{1:N}^{(k)}, \mathbf{z}_{1:N}^{(k+1)} -  \mathbf{\Omega}_{1:N} \mathbf{w}_{1:N}^{(k+1)}  \rangle \\
					&=  \frac{1}{\rho} \left\| \bm{\eta}_{1:N}^{(k+1)} -  \bm{\eta}_{1:N}^{(k)} \right\|^2.
				\end{aligned}
			\end{split} 
		\end{equation}

		Using Lemma \ref{lemma:wb-sub}, we have 
		\begin{equation} \label{eq:non_increasing}
			\begin{split}
				\begin{aligned}
					& \mathcal{L}(\mathbf{w}^{(k+1)}_{1:N},\mathbf{z}^{(k+1)}_{1:N};\bm{\eta}^{(k+1)}_{1:N})
					- 
					\mathcal{L}(\mathbf{w}^{(k)}_{1:N},\mathbf{z}^{(k)}_{1:N};\bm{\eta}^{(k)}_{1:N}) \\
					&\geq  \!  \frac{ M_{w} \!  \! - \!  \! \rho \kappa^2 }{2} 
					\left\| \mathbf{w}_{1:N}^{(k+1)}  \!- \! \mathbf{w}_{1:N}^{(k)} \right\|^2   
					+  \frac{1}{\rho} \left\| \bm{\eta}_{1:N}^{(k+1)}  \!- \!  \bm{\eta}_{1:N}^{(k)} \right\|^2. 
				\end{aligned}
			\end{split} 
		\end{equation}
		Since the condition $\rho \geq \frac{ M_{w}}{\kappa^2} $ is satisfied, we write
		\begin{equation} 
			\begin{split}
				\begin{aligned}
					\mathcal{L}(\mathbf{w}^{(k+1)}_{1:N},\mathbf{z}^{(k+1)}_{1:N};\bm{\eta}^{(k+1)}_{1:N})
					- 
					\mathcal{L}(\mathbf{w}^{(k)}_{1:N},\mathbf{z}^{(k)}_{1:N};\bm{\eta}^{(k)}_{1:N})  \geq 0. 
				\end{aligned}
			\end{split} 
		\end{equation}
	\end{proof}

	Based on Lemma \ref{lemma:nonincreasing}, we can now establish the convergence results in the following theorem.

	\begin{theorem}
		\label{theorem:ADMM-type}
		Let assumptions of Lemmas~\ref{lemma:nonincreasing} be satisfied.
		Then, the sequence $\{\mathbf{w}_{1:N}^{(k)}, \mathbf{z}_{1:N}^{(k)}, \bm{\eta}_{1:N}^{(k)} \}$ generated by the  converges to a local minimum $(\hat{\mathbf{w}},\hat{\mathbf{z}},\hat{\bm{\eta}})$. 
	\end{theorem}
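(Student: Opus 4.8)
The plan is to follow the standard template for convergence of nonconvex ADMM: (i) the augmented Lagrangian is monotone along the iterates (Lemma~\ref{lemma:nonincreasing}) and bounded below on the trajectory, hence convergent; (ii) the successive differences of all three blocks $\mathbf{w}_{1:N}$, $\mathbf{z}_{1:N}$, $\bm{\eta}_{1:N}$ vanish; (iii) the iterates are bounded, so a subsequence converges; (iv) every limit point satisfies the first-order (KKT) optimality conditions, and prox-regularity upgrades this to a local minimum.

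First I would establish a uniform lower bound for $\mathcal{L}$. The loss $f$ is the cross-entropy~\eqref{eq:cross-entropy}, bounded below by $0$, and each $g(\mathbf{z}_l)$ equals $0$ on the feasible set; the only term that could drive $\mathcal{L}$ to $-\infty$ is the linear multiplier term $\bm{\eta}_l^\T(\mathbf{z}_l - \mathbf{\Omega}\mathbf{w}_l)$. I would control it using the first-order condition~\eqref{eq:first_order_x}, which expresses $\mathbf{\Omega}_{1:N}^\T\bm{\eta}_{1:N}^{(k)}$ through $\partial_{\mathbf{w}} f$ and the primal residual; together with Assumption~\ref{assump:omega} (so $\mathbf{\Omega}^\T$ admits a left inverse with constant $\kappa$) and boundedness of the (sub)gradient of $f$ on the relevant sublevel set, this yields a uniform bound on $\{\bm{\eta}^{(k)}_{1:N}\}$ and lets me complete the square in $\mathcal{L}$ to obtain $\mathcal{L}(\mathbf{w}_{1:N}^{(k)},\mathbf{z}_{1:N}^{(k)};\bm{\eta}_{1:N}^{(k)}) \geq \underline{\mathcal{L}} > -\infty$. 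Being monotone and bounded, $\{\mathcal{L}^{(k)}\}$ converges.

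Next, telescoping the estimate~\eqref{eq:non_increasing} over $k$ and using convergence of $\{\mathcal{L}^{(k)}\}$ forces $\sum_k \|\mathbf{w}_{1:N}^{(k+1)} - \mathbf{w}_{1:N}^{(k)}\|^2 < \infty$ and $\sum_k \|\bm{\eta}_{1:N}^{(k+1)} - \bm{\eta}_{1:N}^{(k)}\|^2 < \infty$, so both differences tend to $0$. The dual update~\eqref{eq:etal} then gives $\|\mathbf{z}_l^{(k+1)} - \mathbf{\Omega}\mathbf{w}_l^{(k+1)}\| = \tfrac{1}{\rho}\|\bm{\eta}_l^{(k+1)} - \bm{\eta}_l^{(k)}\| \to 0$, i.e. asymptotic primal feasibility, and combined with $\mathbf{w}^{(k+1)}_{1:N} - \mathbf{w}^{(k)}_{1:N} \to 0$ this yields $\mathbf{z}^{(k+1)}_{1:N} - \mathbf{z}^{(k)}_{1:N} \to 0$. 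Boundedness of $\{\mathbf{w}^{(k)}_{1:N}\}$ follows because $\mathcal{L}$ is coercive in $\mathbf{\Omega}_{1:N}\mathbf{w}_{1:N}$ along the trajectory (the quadratic penalty dominates) and $\mathbf{\Omega}$ is full column rank; boundedness of $\{\mathbf{z}^{(k)}_{1:N}\}$ then follows from $\mathbf{z}_l^{(k)} = \mathbf{\Omega}\mathbf{w}_l^{(k)} + \tfrac{1}{\rho}(\bm{\eta}_l^{(k)} - \bm{\eta}_l^{(k-1)})$, and $\{\bm{\eta}^{(k)}_{1:N}\}$ is bounded by the step above. By Bolzano--Weierstrass there is a subsequence $(\mathbf{w}^{(k_j)}_{1:N}, \mathbf{z}^{(k_j)}_{1:N}, \bm{\eta}^{(k_j)}_{1:N}) \to (\hat{\mathbf{w}}, \hat{\mathbf{z}}, \hat{\bm{\eta}})$.

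Finally I would pass to the limit in the optimality conditions of the three subproblems: \eqref{eq:first_order_x} gives $\partial_{\mathbf{w}} f(\hat{\mathbf{w}},\mathbf{y}) = \mathbf{\Omega}_{1:N}^\T \hat{\bm{\eta}}$ (the residual term vanishes); the $\mathbf{z}_l$-subproblem~\eqref{eq:z-v-subproblem} gives $\mathbf{0} \in \partial g(\hat{\mathbf{z}}_l) + \hat{\bm{\eta}}_l$; and asymptotic feasibility gives $\hat{\mathbf{z}}_l = \mathbf{\Omega}\hat{\mathbf{w}}_l$. These are exactly the KKT conditions of~\eqref{eq:optimization1}. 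Prox-regularity of $f$ (Assumption~\ref{assump:prox-regular}) together with prox-regularity of the cardinality set $\{\mathbf{x} : \operatorname{\textbf{card}}(\mathbf{x}) \leq \text{const}\}$ at $\hat{\mathbf{z}}$ (whose normal cone coincides with the limiting subdifferential of $g$) then upgrades the stationary point to a local minimum of~\eqref{eq:optimization}. The main obstacle I anticipate is precisely this last upgrade — moving from a KKT/stationary point to a genuine \emph{local minimum} in a nonconvex, discontinuous ($\ell_0$-type) setting — and, closely related, securing the uniform lower bound on $\mathcal{L}$ along the trajectory, since in nonconvex ADMM neither is automatic; both hinge on carefully exploiting prox-regularity and the full-column-rank condition $\mathbf{\Omega}\mathbf{\Omega}^\T \succeq \kappa^2 \mathbf{I}$.
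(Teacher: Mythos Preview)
Your proposal is considerably more thorough than the paper's own argument and takes a genuinely different route. The paper's proof is essentially a two-sentence sketch: it cites Lemmas~\ref{lemma:wb-sub} and~\ref{lemma:nonincreasing} for monotonicity of $\mathcal{L}$, asserts (without handling the multiplier term) that $\mathcal{L}$ is lower bounded by $f(\mathbf{w}_{1:N}^{(k)},\mathbf{y})$, and then simply \emph{assumes} the existence of a local minimum $\hat{\mathbf{w}}$ to which $\{\mathbf{w}_{1:N}^{(k)}\}$ converges, inferring convergence of the full triple from that. It does not carry out the telescoping/summability, boundedness, subsequence extraction, or KKT-limit steps you outline, nor does it address the stationarity-to-local-minimum upgrade that you correctly flag as the crux. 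Your approach buys an actual argument; the paper's buys brevity by effectively assuming the conclusion.

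One technical caution on your step~(ii): as written in the paper, \eqref{eq:non_increasing} bounds $\mathcal{L}^{(k+1)} - \mathcal{L}^{(k)}$ \emph{from below} by a sum whose two coefficients have opposite signs ($\tfrac{M_w - \rho\kappa^2}{2} \le 0$ under $\rho \ge M_w/\kappa^2$, while $\tfrac{1}{\rho} > 0$), so telescoping it does not directly yield summability of both $\|\mathbf{w}_{1:N}^{(k+1)}-\mathbf{w}_{1:N}^{(k)}\|^2$ and $\|\bm{\eta}_{1:N}^{(k+1)}-\bm{\eta}_{1:N}^{(k)}\|^2$. To make that step go through you will need, in addition, to control $\|\bm{\eta}_{1:N}^{(k+1)}-\bm{\eta}_{1:N}^{(k)}\|$ by $\|\mathbf{w}_{1:N}^{(k+1)}-\mathbf{w}_{1:N}^{(k)}\|$ via a Lipschitz-type estimate on $\partial_{\mathbf{w}} f$ combined with~\eqref{eq:first_order_x} and Assumption~\ref{assump:omega}, and then absorb the dual increment into the primal one; this is the standard device in nonconvex ADMM analyses but is not supplied by~\eqref{eq:non_increasing} alone.
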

	
	\begin{proof}
		By Lemmas~\ref{lemma:wb-sub} and \ref{lemma:nonincreasing}, the sequence $\mathcal{L}(\mathbf{w}^{(k)}_{1:N},\mathbf{z}^{(k)}_{1:N};$$\bm{\eta}^{(k)}_{1:N})$ is monotonically nonincreasing. Since  $\mathcal{L}(\mathbf{w}^{(k)}_{1:N},\mathbf{z}^{(k)}_{1:N};\bm{\eta}^{(k)}_{1:N})$ is upper bounded by $\mathcal{L}(\mathbf{w}^{(0)}_{1:N},$$\mathbf{z}^{(0)}_{1:N};\bm{\eta}^{(0)}_{1:N})$, and lower bounded by $f(\mathbf{w}_{1:N}^{(k)},\mathbf{y})$. 
		We assume there exists a local minimum $\hat{\mathbf{w}}$ such that the sequence $\{\mathbf{w}_{1:N}^{(k)}\}$ converges to  $\hat{\mathbf{w}}$, which is a local minimum of $\mathbf{w}_{1:N}$-subproblem. We deduce the iterative sequence $\{\mathbf{w}_{1:N}^{(k)},\mathbf{z}_{1:N}^{(k)},\bm{\eta}_{1:N}^{(k)}\}$ generated by the algorithm is locally convergent to $(\hat{\mathbf{w}},\hat{\mathbf{z}},\hat{\bm{\eta}})$.
	\end{proof}

	\subsection{Summary}
	
	To explicitly utilize the graph structure of SSGCNet, we transform the input signals into the frequency domain, and propose the WNFG method to represent the signals. Note that the EEG signal segment should be short enough to make that the signal is stationary within the segment. The duration of the epileptiform wave is the cause of the instability of the signal, so the epileptiform wave cannot change significantly during a segment. The length of a segment should be less than the duration of an epileptic seizure waveform, which will be discussed in Section \ref{sec:settings}. 
	
	Based on the WNFG method, we propose the SSGCNet model, including node aggregation, node sequential convolution and fully connected layer. Due to the requirement of large-scale datasets, we develop a lightweight version of the model by using the ADMM weight pruning method. At each epoch $k_e$, we execute the training and pruning operations. Our SSGCNet model offers competitive accuracy with large speed and memory savings. Now, the main steps of our model are summarized in Algorithm~\ref{alg:SSGCNet}.
	
	\begin{algorithm}[htb] \label{alg:SSGCNet}
		\caption{SSGCNet} 
		\KwIn{weights $\mathbf{w}_{1:N}$, signal $\mathbf{x}$, auxiliary variables $\mathbf{z}_{1:N}$ and $\bm{\eta}_{1:N}$, parameter $\rho$, parameters $\bm{\theta}_{1:n}$} 
		\KwOut{$\hat{\mathbf{w}}$, $\hat{\mathbf{y}}$.}     
		compute graph representation $\mathbf{y}$ by \eqref{eq:x_i} and \eqref{eq:alpha}\;    		
		\For{\text{epoch} $k_e \leq K_{e_{\max}}$  }
		{
			compute the node aggregation $\{o^i\}$ by \eqref{eq:aggregation}\;
			compute the node sequential convolution layer $\{y^{i}\}$ by \eqref{eq:convolution}\;
			\While{not convergent} 
			{ 
				compute $\mathbf{w}_{1:N}$ by \eqref{eq:w-solution}\;  
				compute $\mathbf{z}_{1:N}$ by \eqref{eq:z-solution}\;
				compute $\bm{\eta}_{1:N}$ by \eqref{eq:etal}\;
			}
			compute the fully connected layer $\{y^{fc}\}$ by \eqref{eq:y_fc}\;
			\While{not convergent} 
			{ 
				compute $\mathbf{w}_{1:N}$ by \eqref{eq:w-solution}\;  
				compute $\mathbf{z}_{1:N}$ by \eqref{eq:z-solution}\;
				compute $\bm{\eta}_{1:N}$ by \eqref{eq:etal}\;
			}
			return $\hat{\mathbf{y}} = [y^{fc}_1,\ldots,y^{fc}_{n_y}]$ and $\hat{\mathbf{w}}$\; 
		}
	\end{algorithm}

	\section{Experiments and Discussions}
	\label{sec:results}
	
	In this section, we evaluate our method on the Bonn dataset and the SSW dataset~\cite{andrzejak2001indications}. For comparison, we also report the experimental results on various existing deep learning models, including Multi-Layer Perceptron (MLP)~\cite{ruck1990multilayer}, Graph Neural Network (GNN)~\cite{Xu2019How}, Convolutional Neural Network (CNN)~\cite{sors2018convolutional}.

	\subsection{Datasets}
	
	In this article, we evaluate our method on two single-channel epilepsy EEG signal datasets, namely the Bonn dataset and the SSW dataset.

	The Bonn dataset is a single-channel epilepsy EEG signal dataset containing five categories, including Subset A, Subset B, Subset C, Subset D, and Subset E. Subset A and Subset B are from $5$ healthy subjects. Subsets A and B are collected from the normal EEG signal of the subject with eyes open and closed, respectively. Subsets C, D, and E are from $5$ patients with confirmed epilepsy, including reverse area, epilepsy lesion area and epileptic seizures. Each category is composed of $100$ EEG signals containing $4097$ data points. The sampling frequency is $173.6$ Hz, and the total duration of each signal is $23.6$ seconds. The dataset has been preprocessed, such as myoelectricity and power frequency interference. For a more detailed introduction, please refer to~\cite{andrzejak2001indications}.
	
	The SSW dataset is a single-channel absence epilepsy EEG signal dataset containing two categories (seizure data, non-seizure data). Each category is composed of $10473$ EEG signals containing $200$ data points. The sampling frequency is $200$ Hz. The total duration is $1$ second. Data are collected from $10$ patients diagnosed with absence epilepsy. Absence epileptic seizures have no obvious symptoms of convulsions. The clinical diagnosis is mainly based on the spikes and slow waves (SSW) appearing in the EEG signal. Each seizure EEG signal contains at least $1$ complete spike and slow waves data. Non-seizure data was also captured from $10$ patients with their non-seizure stage.

	\begin{figure*}[t]
		\setlength{\abovecaptionskip}{-0cm}
		\setlength{\belowcaptionskip}{-0cm}
		\centering \includegraphics[width=\textwidth]{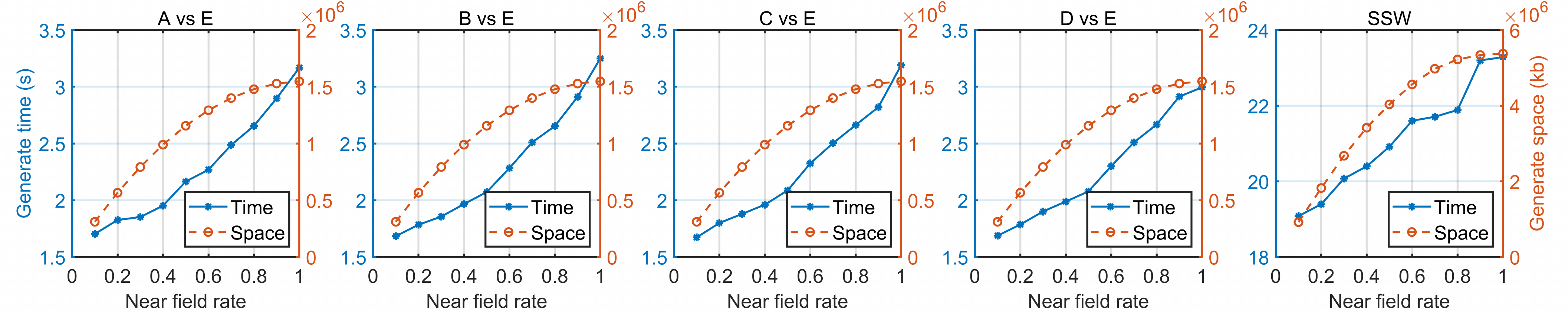}
		\caption{The time and space overhead of WNFG in different near field rates.}
		\label{time_space}
	\end{figure*}

	\begin{figure*}[!htbp]
		\setlength{\abovecaptionskip}{-0cm}
		\setlength{\belowcaptionskip}{-0cm}
		\centering \includegraphics[width=\textwidth]{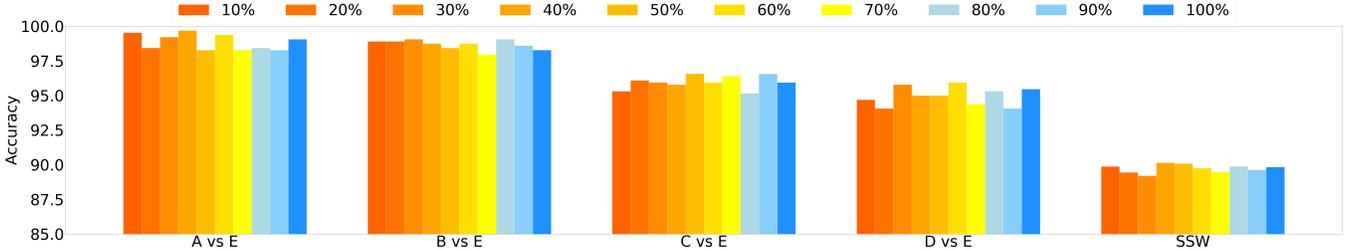}
		\caption{Accuracy of different near field rates of WNFG in Bonn and SSW dataset. We use MLP as the baseline deep learning model.}
		\label{WNFG_acc}
	\end{figure*}
	
	\begin{figure*}[!htbp]
		\setlength{\abovecaptionskip}{-0cm}
		\setlength{\belowcaptionskip}{-0cm}
		\centering \includegraphics[width=0.98\textwidth]{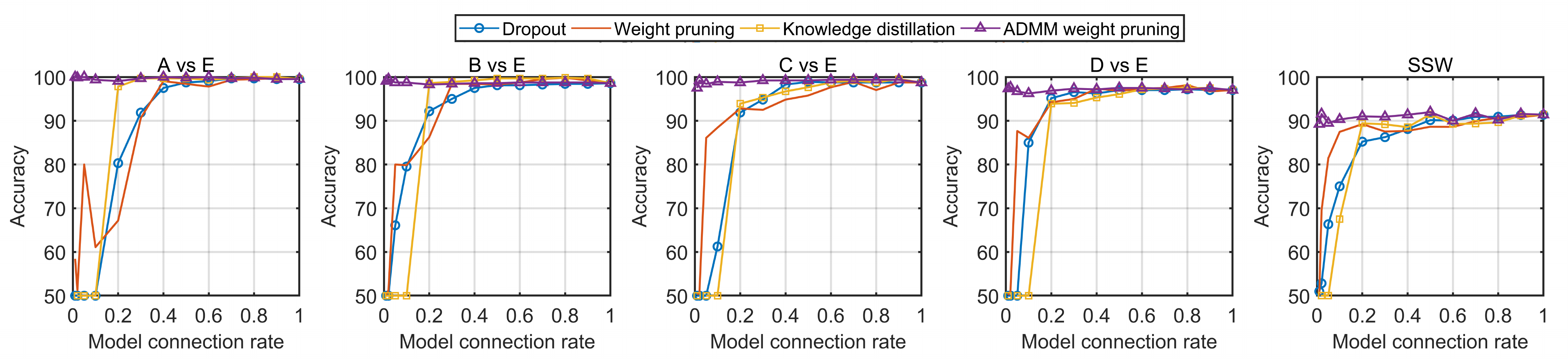}
		\caption{Accuracy of the four pruning methods in Bonn and SSW dataset. We use MLP as the baseline model.}
		\label{experiment2}
	\end{figure*}

	\begin{figure}[!htbp]
		\setlength{\abovecaptionskip}{-0cm}
		\setlength{\belowcaptionskip}{-0cm}
		\centering \includegraphics[width=0.45\textwidth]{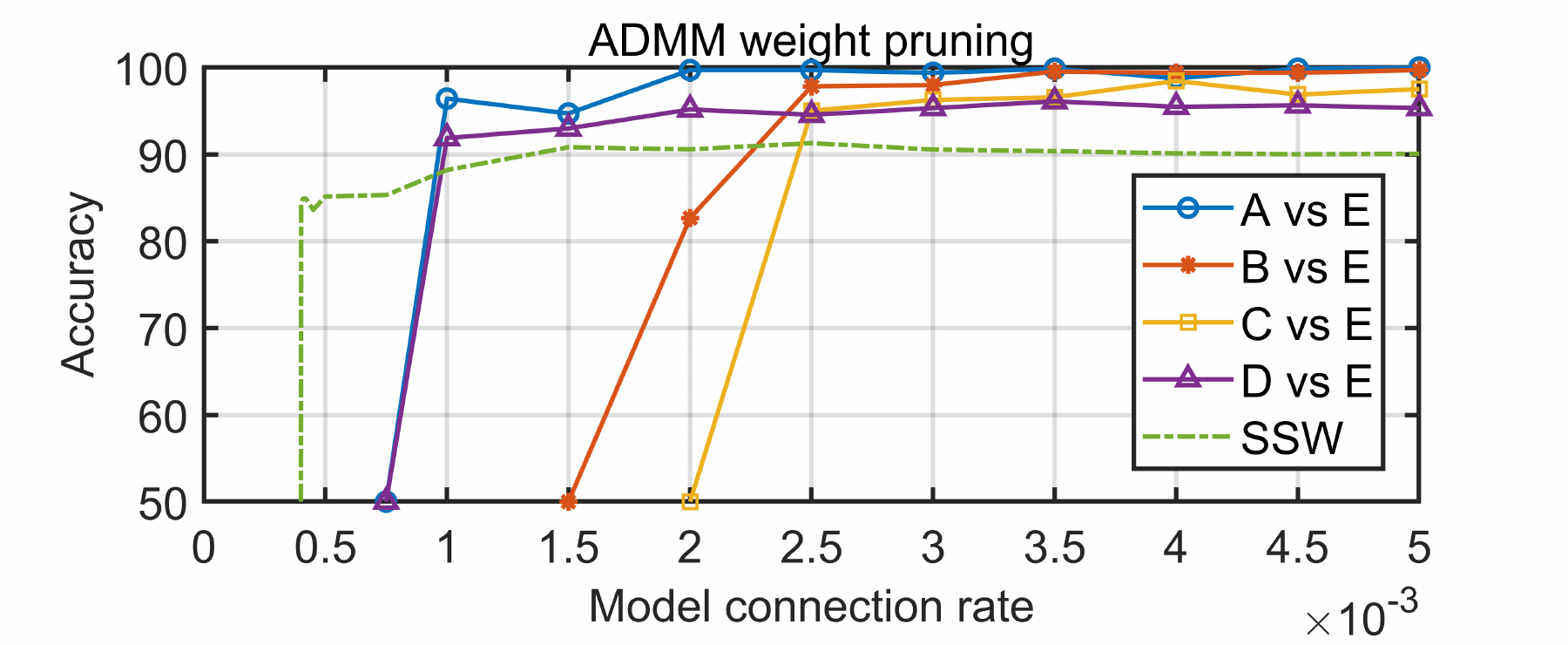}
		\caption{The accuracy of the ADMM weight pruning method in the Bonn and SSW datasets.}
		\label{fig:exp2_admm}
	\end{figure}
	
	\subsection{Experimental Settings}
	\label{sec:settings}
	
	We implemented the experiment in PyTorch, which was tested on 4 Nvidia Titan XP GPUs. All the methods in the following have the same experimental settings.

	For the EEG signal segment, the duration of epileptiform waves is around $3$ seconds to several hours~\cite{glauser2010ethosuximide}. Meanwhile, the EEG signal segment includes enough epileptiform vibration cycles, and the frequency can only be resolved by repeating enough vibration cycles. The fundamental frequency of epilepsy is around $2.5$—$4$ Hz~\cite{crunelli2002childhood}, which is $0.25$ seconds and $0.4$ seconds when converted into cycles. Since a segment contains multiple cycles, it is generally at least greater than $0.4$ seconds. In this article, we segment the single EEG signal from Bonn dataset into small segments of equal length with around $1.5$ seconds (containing $256$ data points without overlap). For the SSW dataset, we set $1$ second as each segment.

	We use the same deep learning model to compare with existing pruning methods. We selected the MLP model as the baseline model for weight pruning optimization. The model contains $4$ layers, consisting of $1$ input layer, $2$ hidden layers, and $1$ output layer. The maximum epoch $K_{e_{\max}}$ is $50$. The input size is $[n,\,200]$, where $n$ is the length of the signal, and the number of hidden layer neurons is $200$. The size of the hidden layer remains unchanged as $[200,\,200]$, and the size of the output layer is $[200,\,c]$ with the number of categories $c$.
	
	We compare the proposed approach with the other three pruning optimization methods, including the dropout method~\cite{baldi2013understanding}, the weighted pruning method~\cite{zhang2018systematic}, and the knowledge distillation method~\cite{yim2017gift}. The dropout method randomly resets the weights to zero, thereby reducing the number of model connections. The weight pruning method adopts the removal of smaller weights, thereby reducing the overfitting of the model. The knowledge distillation method improves the classification performance of the small model by using the small-scaled student model to learn the features extracted by the teacher model.

	This article also uses four different deep learning models to verify the versatility of our methods on different deep learning models. We test four deep learning models used in EEG signal classification tasks: MLP, GNN, CNN, and SSGCNet. 
	MLP is a multilayer perceptron that includes $5$ fully connected layers. The numbers of the neuron are $[n, \,128]$, $[128,\, 64]$, $[64,\,32]$, $[32,\, 16]$, and $[16,\, 2]$, respectively. Here, $n$ is the node number of the input graph.
	GNN is a graph neural network that includes node aggregation and fully connected layers. The model includes two-hop node aggregation and $5$ fully connected layers, and other parameter settings of the model are the same as those of MLP.
	CNN is a deep learning model that includes the one-dimensional convolutional layer, the max-pooling layer and the fully connected layer. The convolutional layer has $4$ layers, the max-pooling layer has 4 layers, and the fully connected layer has 2 layers. The convolution kernel size is $[1,\,3]$, its step size is 1, and the number of channels in each layer is $8$, $16$, $32$, and $64$, respectively. The size of the largest pooling layer is $[1,\,2]$, and the step size is $2$. The dimensions of the fully connected layer are $[64,\, 16]$ and $[16,\, 2]$.
	SSGCNet includes node aggregation, one-dimensional convolution, and fully connected layers (see Fig. \ref{fig:network} for details). In the one-dimensional convolutional layer, the size of the convolution kernel is $[1,\,3]$, its step size is 1, and the number of channels in each layer is $8$, $16$, $32$, and $64$. The size of the largest pooling layer is $[1,\,2]$, and the step size is $2$. Other settings are the same as those of GNN.

	\subsection{Performance Comparison of EEG Signal Classification}
	
	\subsubsection{Time and space complexity}

	We decompose signals into $10$ segments according to the different near field rates of the connection rules ($n = 10$). The near field rate represents the proportion of the connection rule. For example, the near field rate is $0.1$, and the node is only connected to other nodes with a maximum distance of $10\%$ in the graph. For the Bonn and SSW datasets, with different near field rates, we test the generation time and space occupancy rate, and evaluate the accuracy of different near field rate datasets.
	
	It can be seen from Fig.~\ref{time_space} that the generation time and space occupied by different datasets are reduced. In the Bonn dataset, the near field rate is from $1$ to $0.1$. The time cost is reduced by $1.45$ seconds. The space occupancy rate is reduced by $5$ times. In the SSW dataset, the near field rate from $1$ to $0.1$ reduces the time overhead by around $4$ seconds and the space occupancy rate by $5.8$ times. This demonstrates the effectiveness of our WNFG representation.
	
	As shown in Fig.~\ref{WNFG_acc}, when the near field rate decreases, the classification performance under the same deep learning model does not decrease significantly. For example, the classification performance is even higher when the decrease of the near field rate on A vs E is lower. The experiment demonstrates that reducing the number of connections in the graph representation does not cause a significant drop in the performance of the model. Hence, in classifying epilepsy EEG signals, our WNFG can significantly reduce the time and space calculation and retain sufficient classification information.
	
	\subsubsection{Different weight pruning strategies} 
	To compare the pruning methods, we evaluate different pruning optimization methods on the same deep learning model. We also use the WNFG method to analyze all deep learning pruning strategies. Our evaluation method divides the model connection rate of each deep learning model optimization strategy into $10$ segments for evaluation. The model connection rate represents the proportion of the original model containing the number of non-zero parameters. We test different compression ratios on the model structure. In particular, deep learning dropout is used as a control group to compare with other pruning optimization methods.

	\begin{figure*}[!htbp]
		\setlength{\abovecaptionskip}{-0cm}
		\setlength{\belowcaptionskip}{-0cm}
		\centering \includegraphics[width=\textwidth]{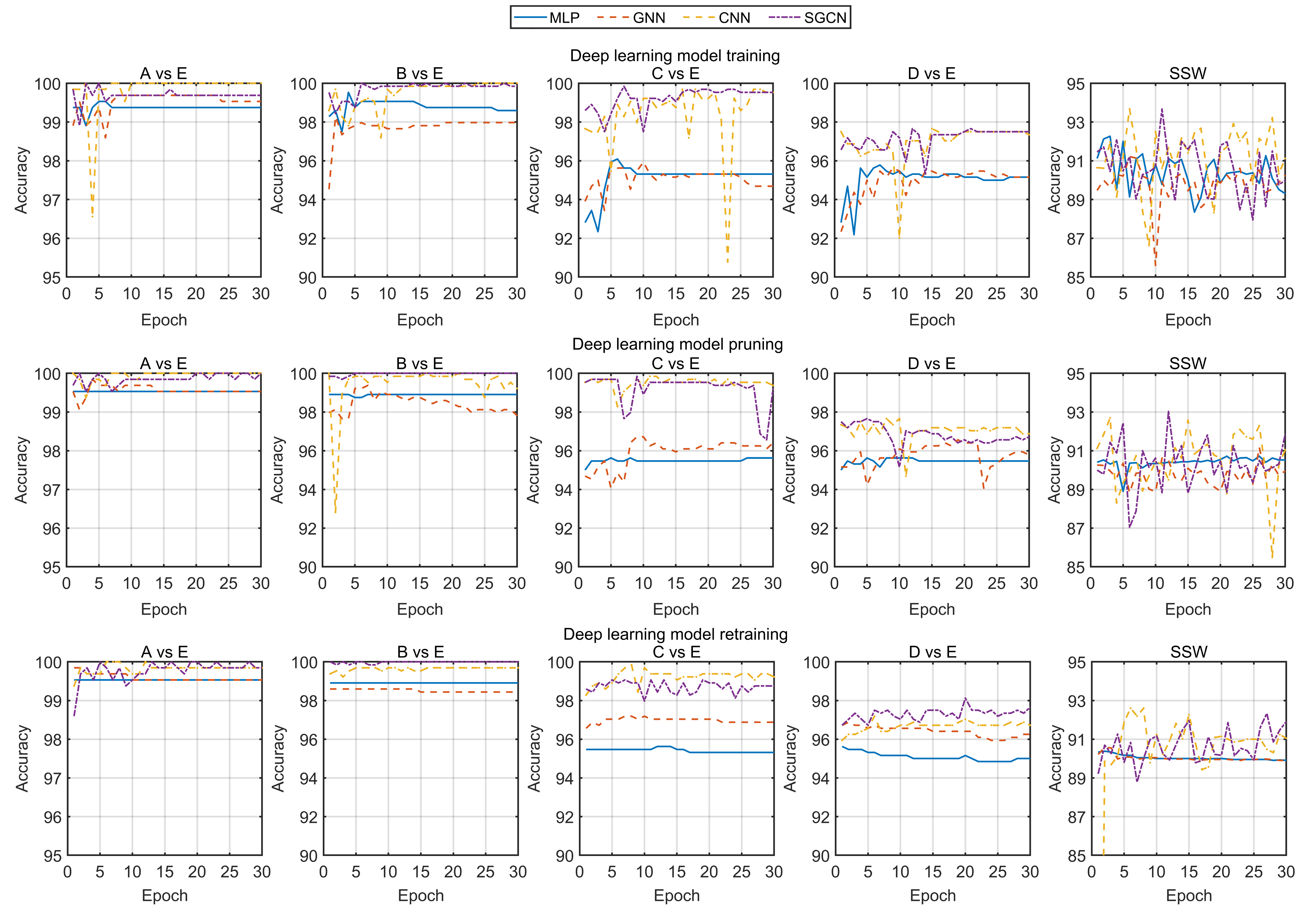}
		\caption{Training curves of four deep learning models with ADMM.}
		\label{model_ADMM}
	\end{figure*}
	
	Fig.~\ref{experiment2} shows the classification results of our method and other methods on the deep learning model.
	When the model connection rate is low, all pruning optimization methods can maintain a high classification performance. As the model connection rate decreases, the performance of different pruning methods changes. Our weight pruning classification results are significantly better than other models. Meanwhile, our method maintains higher accuracy under the same compression ratio. When the model connection rate reaches $0.01$ ($100$ times), other methods have loss the accuracy shown in  Fig.~\ref{experiment2}. Our proposed method can still keep good classification performance.

	\begin{table}[!htbp]
		\setlength\tabcolsep{4 pt}
		\caption{The number of parameters on MLP-ADMM.}
		\setlength{\abovecaptionskip}{-0cm}
		\setlength{\belowcaptionskip}{-0cm}
		\label{MLP_ADMM}
		\centering
		\begin{tabular}{c|c|c|c|c|c|c|c}
			\hline
			\hline
			\multicolumn{1}{c|}{Layers} & \multicolumn{7}{c}{MLP} \\
			\cline{2-8}
			&  \multicolumn{1}{c}{-} & \multicolumn{1}{c}{A vs E} & \multicolumn{1}{c}{B vs E} & \multicolumn{1}{c}{C vs E} & \multicolumn{1}{c|}{D vs E} & \multicolumn{1}{c}{-} & \multicolumn{1}{c}{SSW}\\
			\hline
			\hline
			Layer 1 & 51200   & 52      & 103     & 128     & 52      & 40000  & 17    \\			
			Layer 2 & 40000   & 40      & 80      & 100     & 40      & 40000  & 17    \\
			Layer 3 & 40000   & 40      & 80      & 100     & 40      & 40000  & 17    \\
			Layer 4 & 400     & 1       & 1       & 1       & 1       & 400    & 1     \\
			Non-train & 602     & 602     & 602     & 602     & 602     & 602    & 602    \\
			\hline
			\hline
			Total   & 132202  & 735     & 866     & 931     & 735     & 121002 & 654    \\
			\hline
			\hline		
		\end{tabular}
	\end{table}

	\begin{table*}[!htbp]
		\setlength\tabcolsep{2 pt}
		\caption{The performance of different methods.}
		\setlength{\abovecaptionskip}{-0cm}
		\setlength{\belowcaptionskip}{-0cm}
		\label{SOTA_ADMM}
		\centering
		\resizebox{\textwidth}{!}{
			\begin{tabular}{c|c|c|c|ccc|ccc|ccc|ccc|ccc}
				\hline
				\hline
				\multicolumn{1}{c|}{} & \multicolumn{1}{c|}{Complexity} & \multicolumn{1}{c|}{Number} & \multicolumn{1}{c|}{Operation} & \multicolumn{12}{c|}{Bonn} & \multicolumn{3}{c}{SSW}\\
				\cline{5-19}
				Methods & of & of & of & \multicolumn{3}{c|}{A vs.E} & \multicolumn{3}{c|}{B vs.E} & \multicolumn{3}{c|}{C vs.E} & \multicolumn{3}{c|}{D vs.E} & & \\
				\cline{5-19}
				& Graph & Parameters & multiplication &  Acc & Spe & Sen  & Acc & Spe & Sen & Acc & Spe & Sen & Acc & Spe & Sen & Acc & Spe & Sen \\
				\hline
				\hline
				VG-GNN~\cite{lacasa2008time51} & $O(n^3)$ &44306& 43552& 0.997 & 0.999 & 0.994 & 0.975 & 0.987 & 0.962 & 0.963 & 0.969 & 0.956 & 0.916 & 0.900 & 0.931 & 0.837 & 0.724 & \textbf{0.949}  \\
				LPVG-GNN~\cite{zhou2012limited243} & $O(n^2)$ &44306& 43552& 0.978 & 0.962 & 0.994 & 0.969 & 0.956 & 0.981 & 0.963 & \textbf{0.975} & 0.950 & 0.906 & 0.856 & 0.956 & 0.864 & 0.818 & 0.910  \\
				HVG-GNN~\cite{luque2009horizontal59} & $O(n\log n)$ &44306& 43552& 0.988 & 0.987 & 0.988 & 0.916 & 0.912 & 0.919 & 0.972 & 0.950 & 0.994 & 0.925 & \textbf{0.969} & 0.881 & 0.873 & 0.847 & 0.898  \\
				LPHVG-GNN~\cite{gao2016multiscale245} & $O(n\log n)$ &44306& 43552& 0.994 & 0.994 & 0.994 & 0.959 & 0.950 & 0.969 & 0.966 & 0.969 & 0.962 & 0.922 & 0.940 & 0.922 & 0.862 & 0.837 & 0.887  \\
				WNFG-GNN-ADMM  & $O(n)$ &9467& 8713& \textbf{1.000} & \textbf{1.000} & \textbf{1.000} & \textbf{0.997} & \textbf{0.997} & \textbf{0.997} & \textbf{0.984} & 0.969 & \textbf{1.000} & \textbf{0.972} & 0.963 & \textbf{0.981} & \textbf{0.914} & \textbf{0.975} & 0.853  \\
				\hline
				\hline
				WOG-2DCNN~\cite{wang2020weighted} & $O(n^2)$ & 650210 & 278429824 & \textbf{1.000} & \textbf{1.000} & \textbf{1.000} & 0.995 & \textbf{0.997} & 0.994 & \textbf{0.995} & \textbf{0.994} & \textbf{0.997} & \textbf{0.978} & \textbf{0.981} & \textbf{0.975} & 0.895 & \textbf{0.986} & 0.805  \\			
				WOG-2DCNN-ADMM   & $O(n^2)$ & 6657 & 2817149& \textbf{1.000} & \textbf{1.000} & \textbf{1.000} & \textbf{0.997} & \textbf{0.997} & \textbf{0.997} & 0.992 & 0.988 & \textbf{0.997} & 0.967 & 0.963 & 0.972 & 0.909 & 0.975 & \textbf{0.842}  \\ 
				WNFG-2DCNN-ADMM  & $O(n)$ & 6657 & 2817149& 0.998 & 0.997 & \textbf{1.000} & \textbf{0.997} & \textbf{0.997} & \textbf{0.997} & 0.992 & 0.991 & 0.994 & 0.967 & 0.963 & 0.972 & \textbf{0.910} & 0.978 & \textbf{0.842}  \\
				\hline
				\hline
				LPVG-SGCN~\cite{wang2020sequential} & $O(n^2)$ & 47282 & 678016& 0.992 & \textbf{1.000} & 0.984 & 0.988 & 0.997 & 0.978 & 0.966 & 0.969 & 0.963 & 0.939 & 0.947 & 0.931 & 0.802 & 0.759 & 0.843  \\
				LPVG-SGCN-ADMM & $O(n^2)$ & 4869 & 68058& 0.991 & \textbf{1.000} & 0.981 & 0.986 & 0.984 & 0.988 & 0.964 & 0.975 & 0.953 & 0.956 & 0.963 & 0.950 & 0.768 & 0.693 & 0.844  \\
				WOG-SGCN-ADMM & $O(n^2)$ & 4869 & 68058& \textbf{1.000} & \textbf{1.000} & \textbf{1.000} & \textbf{1.000} & \textbf{1.000} & \textbf{1.000} & \textbf{0.988} & 0.975 & \textbf{1.000} & 0.969 & \textbf{0.984} & 0.953 & 0.911 & \textbf{0.989} & 0.832  \\
				Ours method & $O(n)$ & 4869 & 68058& 0.998 & 0.997 & \textbf{1.000} & \textbf{1.000} & \textbf{1.000} & \textbf{1.000} & \textbf{0.988} & \textbf{0.984} & 0.991 & \textbf{0.977} & 0.972 & \textbf{0.981} & \textbf{0.919} & 0.986 & \textbf{0.852}  \\
				\hline
				\hline
			\end{tabular}
		}
	\end{table*}

	\begin{table}[!htbp]
		\setlength\tabcolsep{5 pt}
		\caption{The number of parameters on differnet deep learning models.}
		\setlength{\abovecaptionskip}{-0cm}
		\setlength{\belowcaptionskip}{-0cm}
		\label{Model_ADMM}
		\centering
		\begin{tabular}{c|cc|cc|cc}
			\hline
			\hline
			\multicolumn{1}{c|}{Layers} & \multicolumn{6}{c}{Deep learning models} \\
			\cline{2-7}
			of &  \multicolumn{2}{c|}{GNN} & \multicolumn{2}{c|}{2DCNN} & \multicolumn{2}{c}{SSGCNet} \\
			\cline{2-7}
			model& - & ADMM & - & ADMM & - & ADMM \\
			\hline
			\hline
			Layer 1 & 32768   & 6554    & 72      & 1       & 24      & 3       \\			
			Layer 2 & 8192    & 1639    & 1152    & 12      & 384     & 39      \\
			Layer 3 & 2048    & 410     & 41472   & 415     & 4608    & 461     \\
			Layer 4 & 512     & 103     & 82944   & 830     & 9216    & 922     \\
			Layer 5 & 32      & 7       & 524288  & 5243    & 32768   & 3277    \\
			Layer 6 & -       & -       & 128     & 2       & 128     & 13      \\
			Non-train  & 754     & 754     & 154     & 154     & 154     & 154     \\
			\hline
			\hline
			Total   & 44306   & 9467    & 650210  & 6657    & 47282   & 4869    \\
			\hline
			\hline		
		\end{tabular}
	\end{table}

	In order to further verify the performance of the proposed method, we use the ADMM weight pruning method to reduce the model connection rate. As can be seen from Fig.~\ref{fig:exp2_admm}, the model connection rate of the ADMM weight pruning method can be reduced to $0.001$ in the A vs E, which achieves up to $1000$ times of redundant edge reduction. Similarly, on the SSW dataset, the model connection rate can be reduced to $0.002$ ($500$ times) in the B vs E, and the model connection rate can be reduced to $0.001$ ($1000$ times) in the D vs E. 
	
	The minimum parameters of the MLP model of ADMM weight pruning are shown in Table~\ref{MLP_ADMM}. Our ADMM weight pruning method can maintain accuracy without loss in almost all model connection rates. We achieve a $10$-fold weight reduction without accuracy loss, and the classification accuracy is higher than other methods. Other methods achieve a compression rate of $2$ times, but the accuracy will be sharply reduced. In the Knowledge distillation method, the performance can remain stable when the model connection rate is $0.2$, but the classification performance will completely fail when the model connection rate reaches $0.1$.  
	
	The results in Figs.~\ref{experiment2} and \ref{fig:exp2_admm} show that our weighted pruning method can maintain high accuracy when extremely low compression ratio. Our method compresses the model to the smaller scale with high accuracy than the compared methods. Those experiments demonstrate that the ADMM weight pruning method is computationally superior to the existing literature.
	
	\subsubsection{Different deep learning models} 
	
	Here, we verify the universal applicability of ADMM-type splitting and weight pruning strategy in deep learning models. For the Bonn dataset and the SSW dataset, we select four different deep learning models, including MLP, GNN, CNN and SSGCNet. We train the original model on both datasets, then use ADMM weight pruning to optimize the model and retrain the pruned model. In this experiment, we uniformly set the model connection rate of different models to $0.1$. The model structure is $10$ times less than the original model parameters.
	
	As shown in Fig.~\ref{model_ADMM}, the training process of the original model has fluctuated. After the pruning is completed, the training process becomes stable. This shows that after subtracting part of the redundant structure, the training process of the model tends to be more stable than the original process. When the model connection rate is $0.1$, we find that the training process remains consistent with the unpruned situation. Even in some cases, the pruned model converges to a high accuracy after retraining. Fig.~\ref{model_ADMM} also shows the comparison results of MLP, GNN, CNN and SSGCNet (for Bonn and SSW) models, respectively. Our weight pruning method achieves $10$-time weight pruning on the deep learning model, and the accuracy remains basically unchanged. For the SSGCNet, there is almost no loss of accuracy. This experiment proves that the ADMM weight pruning strategy has universal applicability in different deep learning models, and it can maintain accuracy without reducing the number of model parameters.

	\subsubsection{Comparisons on other parameters}
	
	For comparison, we apply the WNFG strategy and the ADMM weight pruning method to the existing state-of-the-art methods. We test the performance of all the methods on graph complexity, number of parameters, and multiplication operation.
	
	In Table~\ref{SOTA_ADMM}, our proposed method has the high classification accuracy of the deep learning model. When the space occupancy rate of the graph representation is reduced by $10$ times, and the parameter amount of the 2DCNN model is reduced by $97$ times, the classification accuracy keeps stable. The minimum parameters of our model of ADMM weight pruning are shown in Table~\ref{Model_ADMM}. The experimental results show that the proposed model is suitable for a lightweight epileptic EEG signal classification tasks.

	\section{Conclusion}
	\label{sec:Conclusion}
	In this article, we have introduced a sparse spectra graph convolutional network (SSGCNet) method for EEG signal classification, which is based on ADMM-type splitting and weight pruning methods. We proposed an EEG signal graph representation method, a weighted neighborhood field graph (WNFG), which educed the data generation time and memory usage. We then have introduced an ADMM weight pruning method for compressing redundancy in SSGCNet. Our method has achieved a model connection rate of up to $97$ times in the Bonn dataset and the SSW dataset, respectively. Compared with other methods, our method has a lower computational cost and a smaller loss of classification accuracy.

	{\appendix[Proof of Lemma \ref{lemma:wb-sub}]
		\label{sec:proof}
		For the $\mathbf{w}_{1:N}$-subproblem, we get
		\begin{equation}  \label{eq:wb_sub}
			\begin{split}
				& \mathcal{L}(\mathbf{w}^{(k+1)}_{1:N},\mathbf{z}^{(k)}_{1:N};\bm{\eta}^{(k)}_{1:N})
				- 
				\mathcal{L}(\mathbf{w}^{(k)}_{1:N},\mathbf{z}^{(k)}_{1:N};\bm{\eta}^{(k)}_{1:N}) \\
				& =
				f (\mathbf{w}_{1:N}^{(k+1)}, \mathbf{y}) + g (\mathbf{z}_{1:N}^{(k)}) + (\bm{\eta}_{1:N}^{(k)})^{\T}(\mathbf{z}_{1:N}^{(k)}- \mathbf{\Omega}_{1:N}\mathbf{w}_{1:N}^{(k+1)}) \\
				&\qquad + \frac{\rho}{2} \left\| \mathbf{z}_{1:N}^{(k)}- \mathbf{\Omega}_{1:N}\mathbf{w}_{1:N}^{(k+1)} \right\|_2^2 \\
				& \quad - (
				f (\mathbf{w}_{1:N}^{(k)}, \mathbf{y}) + g ( \mathbf{z}_{1:N}^{(k)})  + (\bm{\eta}_{1:N}^{(k)})^{\T}(\mathbf{z}_{1:N}^{(k)}- \mathbf{\Omega}_{1:N}\mathbf{w}_{1:N}^{(k)}) \\
				&\qquad + \frac{\rho}{2} \left\| \mathbf{z}_{1:N}^{(k)}- \mathbf{\Omega}_{1:N}\mathbf{w}_{1:N}^{(k)} \right\|_2^2) \\
				&=
				f (\mathbf{w}_{1:N}^{(k+1)}, \mathbf{y}) - 
				f (\mathbf{w}_{1:N}^{(k)}, \mathbf{y}) \\
				&\quad + (\bm{\eta}_{1:N}^{(k)})^\T(\mathbf{\Omega}_{1:N}\mathbf{w}_{1:N}^{(k)}-\mathbf{\Omega}_{1:N}\mathbf{w}_{1:N}^{(k+1)})
				\\
				&\qquad + \frac{\rho}{2} \left\| \mathbf{z}_{1:N}^{(k)}- \mathbf{\Omega}_{1:N}\mathbf{w}_{1:N}^{(k+1)} \right\|_2^2  - 
				\frac{\rho}{2} \left\| \mathbf{z}_{1:N}^{(k)}- \mathbf{\Omega}_{1:N}\mathbf{w}_{1:N}^{(k)} \right\|_2^2 
				\\
				&{=} 
				f (\mathbf{w}_{1:N}^{(k+1)}, \mathbf{y}) - 
				f (\mathbf{w}_{1:N}^{(k)}, \mathbf{y}) 	
				+ \frac{\rho}{2} \left\| \mathbf{\Omega}_{1:N}\mathbf{w}_{1:N}^{(k)}- \mathbf{\Omega}_{1:N}\mathbf{w}_{1:N}^{(k+1)} \right\|_2^2  \\
				&\quad -  \langle \bm{\eta}_{1:N}^{(k)} + \rho  (\mathbf{z}_{1:N}^{(k)} - \mathbf{\Omega}_{1:N} \mathbf{w}_{1:N}^{(k)}),
				\mathbf{\Omega}_{1:N}\mathbf{w}_{1:N}^{(k)}- \mathbf{\Omega}_{1:N}\mathbf{w}_{1:N}^{(k+1)} \rangle
				)\\
				&=f (\mathbf{w}_{1:N}^{(k+1)}, \mathbf{y})\! -\! 
				f (\mathbf{w}_{1:N}^{(k)}, \mathbf{y}) 
				\!+\! \frac{\rho_1}{2} \left\| \mathbf{\Omega}_{1:N}\mathbf{w}_{1:N}^{(k)} \!-\! \mathbf{\Omega}_{1:N}\mathbf{w}_{1:N}^{(k+1)} \right\|_2^2 \\
				& \quad 	-  \langle  \partial_{\mathbf{w}}  f(\mathbf{w}_{1:N}^{(k+1)},\mathbf{y}),
				\mathbf{w}_{1:N}^{(k)}- \mathbf{w}_{1:N}^{(k+1)} \rangle.
			\end{split} 
		\end{equation}

		We assume that $\mathbf{\Omega}$ is full-column rank with $\mathbf{\Omega} \mathbf{\Omega}^\T \succeq \kappa^2 \mathbf{I}$, we have 
		\begin{equation}\label{eq:bound_dual_b}
			\begin{split}
				\begin{aligned}
					&\|  \mathbf{\Omega}_{1:N} \mathbf{w}_{1:N}^{(k+1)}  -  \mathbf{\Omega}_{1:N}\mathbf{w}_{1:N}^{(k)} \|^2  \\
					& =  \langle \mathbf{\Omega}_{1:N} (\mathbf{w}_{1:N}^{(k+1)} - \mathbf{w}_{1:N}^{(k)}) ,  \mathbf{\Omega}_{1:N} (\mathbf{w}_{1:N}^{(k+1)} - \mathbf{w}_{1:N}^{(k)}) \rangle \\
					& =  \langle  \mathbf{\Omega}_{1:N} \mathbf{\Omega}_{1:N}^\T (\mathbf{w}_{1:N}^{(k+1)} - \mathbf{w}_{1:N}^{(k)}) ,  (\mathbf{w}_{1:N}^{(k+1)} - \mathbf{w}_{1:N}^{(k)}) \rangle \\
					& \geq \kappa^2 \left\| \mathbf{w}_{1:N}^{(k+1)} - \mathbf{w}_{1:N}^{(k)} \right\|^2.
				\end{aligned}
			\end{split} 
		\end{equation}

		Combining \eqref{eq:prox-regular-case} and \eqref{eq:bound_dual_b}, we can obtain
		\begin{equation}  \label{eq:x_sub}
			\begin{split}
				& \mathcal{L}(\mathbf{w}^{(k+1)}_{1:N},\mathbf{z}^{(k+1)}_{1:N};\bm{\eta}^{(k+1)}_{1:N})
				- 
				\mathcal{L}(\mathbf{w}^{(k)}_{1:N},\mathbf{z}^{(k)}_{1:N};\bm{\eta}^{(k)}_{1:N}) \\
				& \geq 
				\frac{ M_{w} - \rho \kappa^2 }{2} 
				\left\| \mathbf{w}_{1:N}^{(k+1)}- \mathbf{w}_{1:N}^{(k)} \right\|^2 . 
			\end{split} 
		\end{equation}
	}

	\bibliographystyle{IEEEtran}
	\bibliography{bibfile}

\end{document}